\documentclass[a4paper,11pt,reqno]{article}
\usepackage{relsize}
\usepackage{cite}
\usepackage{color}
\usepackage{mathtools}
\usepackage[dvipsnames]{xcolor}

\usepackage{hyperref, enumitem}

\hypersetup{
  colorlinks   = true, 
  urlcolor     = blue, 
  linkcolor    = Purple, 
  citecolor   = red 
}
\usepackage{amsmath,amsthm,amssymb,mathrsfs}

\usepackage[margin=2.5cm]{geometry}

\usepackage{graphicx}
\usepackage[T1]{fontenc}
\usepackage{authblk}
\usepackage[english]{babel}
\usepackage{stackrel}

\usepackage{array,booktabs,longtable}
\theoremstyle{definition}
\newtheorem{theorem}{Theorem}[section]

\newtheorem{proposition}[theorem]{Proposition}
\newtheorem{corollary}[theorem]{Corollary}
\newtheorem{lemma}[theorem]{Lemma}

\newtheorem{definition}[theorem]{Definition}

\newtheorem{remark}[theorem]{Remark}
\newtheorem{claim}[theorem]{Claim}

\newcommand{\F}{\mathbb F}
\newcommand{\Ftwo}{\mathbb F_2}

\newcommand{\Cm}{\mathcal C_m(q_0)}
\newcommand{\PP}{\mathbb P}

\newcommand{\spn}[1]{\langle #1\rangle_{\Ftwo}}
\newcommand{\spnq}[1]{\langle #1\rangle_{\F_{q_0}}}

\DeclareMathOperator{\tr}{tr}
\DeclareMathOperator{\Tr}{Tr}
\DeclareMathOperator{\wt}{wt}

\DeclareMathOperator{\N}{N}

\title{{\bf Subspace coverings and generalized covering radii of generalized Zetterberg codes}}

\author{Shitao Li\thanks{School of Internet, Anhui University, Hefei, Anhui 230039, China.
E-mail:~ \textsf{lishitao0216@163.com}.},~ 
 Yang Li\thanks{School of Physical and Mathematical Sciences, Nanyang Technological University, Singapore 637371, Singapore. E-mail:~ \textsf{yanglimath@163.com}. },~
 Gaojun Luo\thanks{School of Mathematics, Nanjing University of Aeronautics and Astronautics, Nanjing, Jiangsu 211106, China. E-mail:~ \textsf{gaojun\_luo@nuaa.edu.cn}.},~
 Zhonghua Sun\thanks{School of Mathematics, Hefei University of Technology, Hefei, Anhui, 230601, China. E-mail:~ \textsf{sunzhonghuas@163.com}.}
 }

\begin{document}
\maketitle

\begin{abstract}
Generalized covering radii measure how many columns of a parity-check matrix are needed to generate several syndromes simultaneously. Their finite-geometric counterparts are $(\rho,t)$-saturating sets, for which every $t$-dimensional subspace is contained in a subspace generated by at most $\rho$ prescribed vectors. We investigate this covering problem for the norm-one configurations associated with generalized Zetterberg codes. We establish the upper bound $2t+1$ over every nonbinary finite field and in an explicit binary range, together with complementary lower bounds obtained by counting subspaces and constructing subfield obstructions. For an explicit range of large $t$, these configurations are $t$-strong blocking sets, and the $t^{\rm th}$ generalized covering radius attains its minimum possible value $t$. For binary Zetterberg codes, we determine the second generalized covering radius in every extension degree and prove that the third radius is seven for an infinite subfamily.
\end{abstract}

\noindent\textbf{Keywords:~} Generalized covering radius; Generalized Zetterberg codes; Subspace covering; Finite geometry; Norm-one subgroup

\medskip
\noindent\textbf{2020 Mathematics Subject Classification:~} Primary 05D05; Secondary 05B25, 51E20, 94B05.

\section{Introduction}
Covering problems form a classical theme in coding theory; see, for instance, \cite{CHLL1997}.  Given a linear code \(C\subseteq \F_{q_0}^{n}\), its covering radius \(\rho(C)\) is the smallest integer \(\rho\) such that every vector of \(\F_{q_0}^{n}\) lies at Hamming distance at most \(\rho\) from some codeword of \(C\). Equivalently, if \(H\) is a parity-check matrix of \(C\), then \(\rho(C)\) is the smallest integer \(\rho\) such that every syndrome belongs to the span of at most \(\rho\) columns of \(H\).
This syndrome formulation provides a natural bridge to finite geometry:~ the columns of \(H\) may be regarded as vectors in a finite-dimensional space, such that covering properties of the code are translated into spanning properties of the corresponding finite configuration \cite{D1995,DFMP2005,DMP2003}.
Related projective spanning conditions also arise in the theory of strong and cutting blocking sets, which have been studied in close connection with linear codes; see \cite{ABNR2022,ABDN2024,BT2026}.
This viewpoint is closely related to the theory of saturating sets, complete caps, and other covering configurations in finite projective spaces \cite{CCMP2023,G2013,HS2001}.

The covering radius is also an important parameter from the coding-theoretic point of view. It arises naturally in decoding, data compression, testing, write-once memories, and related combinatorial covering problems (see \cite{BLP1998,CHLL1997,CKMS1985} and the references therein).
Moreover, it is closely connected with several classical extremal classes of codes, most notably perfect and quasi-perfect codes \cite{CKMS1985,CHLL1997,HuffmanPless2003}.
However, determining the covering radius is difficult even for structured codes \cite{DJ-MC}. It has been proven in \cite{M-IT1984} that computing the exact covering radius of a code is both NP-hard and co-NP-hard. Therefore,  much of the literature concerns upper and lower bounds rather than exact values \cite{AB2002,T-DAM1987,JCTA1}.
Exact covering radii are known only for selected structured families, including Melas, BCH, cyclic, and Reed-Muller codes. See, for example, \cite{SHOS2022,MC-IT2003,GPZ1960,Helleseth1985,GaoEtAl2023}.

Zetterberg codes form one of the classical families for which the covering problem has particularly rich algebraic and arithmetic structure. The binary Zetterberg code is a cyclic code whose parity-check columns are given by the norm-one subgroup of a quadratic finite-field extension \cite{Zetterberg1962}.
Dodunekov \cite{Dodunekov1985} proved that the double-error-correcting binary Zetterberg codes of length $2^{2s}+1$, with $s\ge2$, have covering radius three and are quasi-perfect. The related ternary quasi-perfect codes were constructed by Gashkov and Sidel'nikov~\cite{GS1986}.
More recently, generalized Zetterberg codes over arbitrary finite fields and their half-Zetterberg variants have been studied systematically in \cite{SHO2023,SLHO2025,SHO2026,XiongYan2026}.
In particular, Xiong and Yan \cite{XiongYan2026} established a uniform upper bound of three for the ordinary covering radius over every finite field by means of character sums and Weil-type estimates, and determined the exact radius in broad parameter ranges.

Motivated by the study of database linear querying, Elimelech, Firer, and Schwartz~\cite{EFS2021} introduced the concept of \emph{generalized covering radii} by replacing a single syndrome by a collection of syndromes that must share one set of columns. More precisely, if \(H\in \F_{q_0}^{k\times n}\) is a parity-check matrix of \(C\), then \(\rho_t(C)\) is the smallest integer \(r\) such that every \(t\)-dimensional subspace of \(\mathbb F_{q_0}^{k}\) is contained in the span of at most \(r\) columns of \(H\).
The case \(t=1\) recovers the ordinary covering radius. Equivalently, one may formulate the parameter in terms of the union of the supports of representatives of several cosets, or as an ordinary covering radius after extension of scalars \cite{EFS2021}.
The generalized covering radii of several classical families have recently been investigated.
Bounds and exact values for Reed-Muller codes were obtained in \cite{EWW2022}, while Elimelech and Schwartz \cite{ES2024} studied the second-order football-pool problem and determined the corresponding asymptotic rate.
Exact second generalized covering radii for binary BCH and related cyclic codes were obtained in \cite{YS2025,LuoZhouEtAl2026}, and generalized covering radii of double-error-correcting BCH codes were studied in \cite{OO2026,XiongYip2026}.
For Melas codes, Luo {\em et al.} \cite{LYYCO2026} obtained bounds in the binary case and determined the third radius to be seven in a specified range. Li and Xiong~\cite{LiXiong2026} determined the second radii over all finite fields and established analogous higher-order bounds.
It was also noted that there are many gaps in the above results. Recently, a geometric framework for these parameters was developed by Alfarano, Marino, Neri, and Trombetti~\cite{AMNT2026}, linking generalized covering radii with subspace-spanning configurations in finite projective geometry.

In this paper, we focus on generalized covering radii of generalized Zetterberg codes. Let $q_0$ be a prime power, let $m\ge1$, and put $q=q_0^m$. Let \(\Cm\) denote the \(q_0\)-ary generalized Zetterberg code of length \(q_0^m+1\) and redundancy $2m$. 
Its definition is recalled in Section~\ref{subsec1}. Set $\delta=\gcd(2,q_0-1)$, the number of norm-one representatives of each projective column direction. Our first result gives upper and lower bounds for the whole family.

\begin{theorem}\label{thm:~main-bounds}
Let $1\le t\le2m$. The following results hold. 
\begin{enumerate}
\item [(i)] If $q_0\ge3$, then
\begin{equation*}
 \rho_t(\Cm)\le\min\{2t+1,2m\}.
\end{equation*}
For $q_0=2$, the same bound holds whenever $m\ge2t+2$.

\item [(ii)] For every $q_0$, if $2\le t\le m$ and \( q\ge\Theta_{q_0,t}:=\frac{q_0^{t(2t-1)}}{\delta^{2t-1}(2t-1)!}\),
then $\rho_t(\Cm)\ge2t$.

\item [(iii)] Suppose $m=2s$. Then
\begin{equation}\label{eq:~h-lower}
 \rho_t(\mathcal C_{2s}(q_0))\ge h_{q_0}(t):=\min\{\ell\ge t:~q_0^{\ell-t}\ge1+(q_0-1)\ell\}
\end{equation}
for $1\le t\le s-1$ if $q_0$ is even, and for $1\le t\le s$
if $q_0$ is odd.
\end{enumerate}
\end{theorem}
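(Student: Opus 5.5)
The columns of the parity-check matrix of $\Cm$ are the elements of the norm-one group $U=\{x\in\F_{q^2}: x^{q+1}=1\}$, viewed as vectors in $\F_{q^2}\cong\F_{q_0}^{2m}$. So $\rho_t(\Cm)$ is the least $r$ such that every $t$-dimensional $\F_{q_0}$-subspace $W\le\F_{q^2}$ lies in the $\F_{q_0}$-span of some $r$ elements of $U$. I would prove the three parts separately.

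\emph{Part (i).} The bound $2m$ is trivial, because $U$ spans $\F_{q^2}$. For $2t+1$ I would use the known ordinary covering result of Xiong and Yan~\cite{XiongYan2026}, which I take as given: every $y\in\F_{q^2}$ is an $\F_{q_0}$-combination of at most three elements of $U$. This works by extension of scalars: $W$ can be encoded as a single vector over $\F_{q_0^t}$. That alone only gives $3t$, so a finer argument is needed.
\begin{enumerate}
\item Pick a basis $w_1,\dots,w_t$ of $W$. Greedily choose $u_1,\dots,u_{t}\in U$ such that each $w_i$ lies in $\langle u_1,\dots,u_t\rangle+\langle u_i',u_i''\rangle$ for suitable $u_i',u_i''$. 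This gives $t+2t$ elements, which is still too many.
\item Instead, I would show that for any $t$-dimensional $W$ there exists $u_0\in U$ such that, in the quotient $\F_{q^2}/\langle u_0\rangle$, each basis vector $w_i$ is covered by two elements of $U$.
\item The core is a character-sum count. Count the solutions of $w_i=a_0u_0+a_iu_i+b_iv_i$ with $u_0$ shared across all $i$. With $u_0$ fixed, the equations for different $i$ become independent, and each is the ``two-element'' covering of $w_i-a_0u_0$.
\end{enumerate}
The binary restriction $m\ge 2t+2$ comes from the Weil error terms: the main term must dominate an error of order $q^{1/2}$ raised to a power depending on $t$, so $m$ has to be large relative to $t$. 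I expect this step to be the main obstacle. Specifically, one must prove that for a fixed pair (or line) of elements of $U$ the set of two-element-coverable targets has density at least $1-O(q^{-1/2})$, and then average over $u_0$ uniformly in $t$.

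\emph{Part (ii).} I would count. The number of $\F_{q_0}$-subspaces spanned by $2t-1$ elements of $U$ is at most $\binom{q+1}{2t-1}\big/\delta^{2t-1}$ up to projective scaling, and this is roughly $(q/\delta)^{2t-1}/(2t-1)!$. Each such subspace has dimension at most $2t-1$, so it contains at most $\binom{2t-1}{t}_{q_0}$ subspaces of dimension $t$. On the other side, $\F_{q^2}$ contains $\binom{2m}{t}_{q_0}\approx q_0^{t(2m-t)}=q^{2t}q_0^{-t^2}$ subspaces of dimension $t$. Comparing the two counts, the covering fails once $q$ exceeds $q_0^{t(t-1)}\cdot\binom{2t-1}{t}_{q_0}\cdot \delta^{-(2t-1)}/(2t-1)!$ up to lower-order terms. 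Bounding the Gaussian binomials by $q_0^{t(t-1)}$-type estimates should reproduce $\Theta_{q_0,t}$; the constants need care but this is routine.

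\emph{Part (iii).} With $m=2s$, take the subfield $K=\F_{q_0^{2s'}}\subseteq\F_{q^2}$ for a suitable subfield, so that $K\cap U$ is small: $|K\cap U|=\gcd(q+1,|K^*|)$ is tiny, about $2$ or $q_0+1$. Choose $W$ to be a $t$-dimensional subspace placed so that any cover of it by elements of $U$ forces many of those elements to be linearly dependent over a small structure. Then a set $S\subseteq U$ with $|S|=\ell$ covering $W$ gives a count: the $q_0^{\ell}$ combinations restricted modulo $W$ number $q_0^{\ell-t}$ cosets, and these must contain the $1+(q_0-1)\ell$ distinct ``trivial'' points. That is exactly the inequality $q_0^{\ell-t}\ge 1+(q_0-1)\ell$, and the parity condition on $q_0$ sets the admissible range of $t$ (up to $s$ or $s-1$).
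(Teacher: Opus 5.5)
Your part (ii) is sound. Parts (i) and (iii) each leave out the idea that actually carries the proof.

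\textbf{Part (i).} The skeleton is right: one shared column plus two columns per basis vector gives $2t+1$. The problem is step 3, which rests on two-column-coverable targets having density $1-O(q^{-1/2})$. That density is not there in the binary case, which the statement covers.
\begin{itemize}
\item For $b\neq0$ one has $b\in U+U$ iff $\tr(\N(b)^{-1})=1$, which holds for about half of $\F_{q^2}$.
\item For a deep vector $a$, the set $A(a)=\{x\in U:\ a-x\in U+U\}$ of admissible shared summands has size about $(q+1)/2$.
\item So with the basis $w_1,\dots,w_t$ fixed in advance, a common $u_0$ is heuristically an event of density about $2^{-t}$. Neither a union bound nor averaging over $u_0$ produces it.
\item You would need joint character sums over all $2^t-1$ combinations of the $t$ conditions, on curves of growing genus. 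You have not set these up.
\item Taking $\rho_1\le3$ as a black box gives no count of admissible shared summands at all.
\end{itemize}

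The paper supplies two ingredients you are missing.
\begin{itemize}
\item \emph{Quantitative anchor bound.} For every $a\notin\F_{q_0}^*U\cup\{0\}$, $|A(a)|\ge(q+1-2\sqrt q)/2$. This comes from writing $f(x)=\N(a-x)=c+1-a^qx-a/x$ on $U$, parametrizing $U$ by $\PP^1(\F_q)$ via a M\"obius map $X(T)$, and applying Hasse--Weil to a genus-one curve: $Y^2+Y=1/f(X(T))$ in even characteristic, $Y^2=h(X(T))$ with $h=f(f-4)$ in odd characteristic.
\item \emph{The basis is not fixed.} Double count pairs $(a,x)$ with $a\in W\setminus\{0\}$ and $x\in A(a)$. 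Suppose that for no $x$ does $B_x=\{a:\ x\in A(a)\}$ span $W$. Then $|B_x|\le q_0^{t-1}-1$ for all $x$, contradicting $(q_0^t-1)L>(q+1)(q_0^{t-1}-1)$. So some $x$ is a shared summand for a spanning subset of $W$, and you extract the basis from $B_x$.
\end{itemize}
Subspaces meeting $\F_{q_0}^*U$ are handled by induction on dimension. For $q_0=2$ the condition reads $q+1>2(2^t-1)\sqrt q$, which is exactly where $m\ge2t+2$ comes from. The error term is linear in $\sqrt q$, not a power of it.

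\textbf{Part (iii).} Your coset count is the Hamming bound in disguise. It can be read for the $[\ell,t]$ code $D\le\F_{q_0}^\ell$ obtained by lifting a basis of $W$ through $(e_1,\dots,e_\ell)\mapsto\sum e_iu_i$, or equivalently in $\spnq{S}/W$. But it is valid only if the $1+(q_0-1)\ell$ points $0$ and $\lambda u$ ($u\in S$, $\lambda\in\F_{q_0}^*$) are pairwise distinct modulo $W$. That means no nonzero vector of $W$ may be an $\F_{q_0}$-combination of at most two elements of $U$. You never construct such a $W$. ``$K\cap U$ is small'' is not the mechanism, and nothing in your sketch explains the ranges $t\le s-1$ and $t\le s$.

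The missing lemma: for $m=2s$, no $a\in\F_q\setminus\F_{q_0}$ with $a^2\in\F_{q_0^s}$ is a combination of at most two columns.
\begin{itemize}
\item One column is excluded because $\F_q\cap U=\{u\in\F_q:\ u^2=1\}\subseteq\F_{q_0}$.
\item If $a=c_1u+c_2v$, then $\N(a-c_1u)=c_2^2$ shows that $z=ac_1u$ is a root of $Z^2-(a^2+c_1^2-c_2^2)Z+a^2c_1^2$. This quadratic has coefficients in $\F_{q_0^s}$, so it splits in $\F_q$. Hence $u,v\in\F_q\cap U\subseteq\F_{q_0}$ and $a\in\F_{q_0}$, a contradiction.
\end{itemize}
Then take $W$ inside an $\F_{q_0}$-complement of $\F_{q_0}$ in $\F_{q_0^s}$ (dimension $s-1$, $q_0$ even). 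For $q_0$ odd, take $W$ inside $b\F_{q_0^s}$ with $b^2$ a nonsquare of $\F_{q_0^s}$ (dimension $s$).

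\textbf{Part (ii).} Counting $t$-subspaces against spans of $2t-1$ column directions is a valid variant. The paper instead compares ordered $t$-tuples, $q^{2t}\le\binom{n_0}{2t-1}q_0^{(2t-1)t}$, and so never needs Gaussian binomials. Two corrections:
\begin{itemize}
\item Since $\binom{2m}{t}_{q_0}\ge q^{2t}q_0^{-t^2}$, your threshold prefactor should be $q_0^{t^2}$, not $q_0^{t(t-1)}$.
\item To land exactly on $\Theta_{q_0,t}$ rather than a constant multiple, bound the ratio factor by factor: $\binom{2t-1}{t}_{q_0}/\binom{2m}{t}_{q_0}=\prod_{i=0}^{t-1}\frac{q_0^{2t-1-i}-1}{q_0^{2m-i}-1}<q_0^{t(2t-1)}q^{-2t}$. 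Do not estimate the two Gaussian binomials separately.
\end{itemize}
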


Alfarano, Marino, Neri, and Trombetti~\cite{AMNT2026} showed that the extremal case \(\rho=t\) is equivalent to the strong blocking property:~ every \(t\)-dimensional subspace is generated by the column directions that it already contains.
The following theorem shows the generalized covering radius of generalized Zetterberg codes attains the dimension lower bound for large $t$.

\begin{theorem}\label{thm:~main-high}
If $1\le t\le2m$ and $\Delta_t:=q+1-2(2q_0^{2m-t}-1)\sqrt q>0$, then 
\begin{equation*}
 \rho_t(\Cm)=t.
\end{equation*}
In particular, for $s\ge2$ this holds for $\mathcal C_{2s}(q_0)$ throughout $3s+2\le t\le4s$ for every $q_0$, and throughout $3s+1\le t\le4s$ when $q_0\ge4$.
\end{theorem}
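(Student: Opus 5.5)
The plan is to use the strong blocking characterization of Alfarano, Marino, Neri and Trombetti recalled above. By it, $\rho_t(\Cm)=t$ holds exactly when every $t$-dimensional $\F_{q_0}$-subspace $W\subseteq\F_{q^2}$ (the syndrome space, viewed as $\F_{q_0}^{2m}$) is spanned by the norm-one elements $U=\{x\in\F_{q^2}:x^{q+1}=1\}$ that it contains. Since $\rho_t\ge t$ is trivial, it suffices to show $\langle W\cap U\rangle_{\F_{q_0}}=W$ for every such $W$. I would argue by contradiction. Suppose $\langle W\cap U\rangle\ne W$. Then there is a hyperplane $W'$ of $W$ with $W\cap U\subseteq W'$. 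Equivalently, there is an $\F_{q_0}$-subspace $V\subseteq\F_{q^2}$ of codimension $t-1$, containing $W'$ and meeting $W$ exactly in $W'$, for which the set $(W\setminus W')\cap U$ is empty. I would phrase this dually. Choose a nonzero $\F_{q_0}$-linear functional $\lambda$ that vanishes on $W'$ but not on $W$. Then the affine set $A=W\setminus W'$ contains no norm-one element. So it suffices to show that every affine piece of this shape meets $U$.

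The counting step uses additive characters. Writing $W=\{x:\Tr_{q^2/q_0}(a_ix)=0,\ i=1,\dots,2m-t\}$ for suitable $a_i$, the number of $x\in U$ lying in $W$ is
\[
\frac{1}{q_0^{2m-t}}\sum_{b\in\langle a_i\rangle_{\F_{q_0}}}\ \sum_{x\in U}\psi(bx),
\]
where $\psi$ is the canonical additive character of $\F_{q^2}$. The $b=0$ term gives $(q+1)/q_0^{2m-t}$. For $b\ne0$, the sum $\sum_{x\in U}\psi(bx)$ is a Kloosterman-type sum. Parametrizing $U$ by $x=y^{q-1}$, or via $x\mapsto \Tr_{q^2/q}(bx)$, it reduces to a Kloosterman sum over $\F_q$ and is bounded by $2\sqrt q$. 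To count points of $W\setminus W'$ I would subtract the count for $W'$, which has codimension one larger. Combining the main terms and error terms gives
\[
|(W\setminus W')\cap U|\ \ge\ \frac{1}{q_0^{2m-t+1}}\Bigl((q_0-1)(q+1)-(q_0-1)\cdot 2\sqrt q\,(2q_0^{2m-t}-1)\cdots\Bigr),
\]
and I would arrange the bookkeeping so that it is a positive multiple of $\Delta_t=q+1-2(2q_0^{2m-t}-1)\sqrt q$. A cleaner route may be to count $x\in U$ with $x\in W$ and $\lambda(x)=c$ for a fixed $c\ne0$, inserting an extra character in $\lambda$. The nontrivial frequencies then number $q_0^{2m-t+1}-1$, but they split by coset. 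I expect the exact constant $2q_0^{2m-t}-1$ to come from $q_0^{2m-t}-1$ frequencies in the $W$-annihilator together with $q_0^{2m-t}$ shifted ones, each bounded by $2\sqrt q$. This yields $\Delta_t>0\Rightarrow$ a point of $U$ in $A$, and hence the contradiction.

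The main obstacle is getting the Kloosterman reduction and the error term exactly right, so that the bound matches $\Delta_t$ rather than a weaker constant. This includes checking the degenerate frequencies where $b$ is such that $\Tr(bx)$ is constant on $U$, and I expect none to exist for $b\ne0$. For the "in particular" claim with $m=2s$ and $q=q_0^{2s}$: $\Delta_t>0$ follows from $q_0^{2s}+1>2(2q_0^{4s-t}-1)q_0^{s}$. With $t\ge3s+2$ we have $4q_0^{s-2}q_0^{s}=4q_0^{2s-2}\le q_0^{2s}$ when $q_0\ge2$, and the $-1$ and $+1$ terms make the inequality strict. With $t\ge3s+1$ we need $4q_0^{2s-1}<q_0^{2s}+1+2q_0^s$, which holds when $q_0\ge4$. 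The range requires $3s+1\le 4s$, which is where $s\ge2$ is used.
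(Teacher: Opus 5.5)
Your proposal is correct and is essentially the paper's argument. You reduce to the $t$-strong blocking property and write $q_0^{2m-t+1}\,|U\cap(W\setminus W')|=(q_0-1)\sum_{b\in W^{\perp}}\widehat U(b)-\sum_{b\in (W')^{\perp}\setminus W^{\perp}}\widehat U(b)$, combining the two character sums before estimating; after dividing by $q_0-1$ this gives exactly your $(q_0^{2m-t}-1)+q_0^{2m-t}$ frequency count, and each nonzero Kloosterman-type coefficient is bounded by $2\sqrt q$. Avoid your ``cleaner'' fixed-$c$ variant: with the termwise Weil bound, a single level set $\{\lambda=c\}$ only gives the error constant $q_0^{2m-t+1}-1$, which matches $2q_0^{2m-t}-1$ only when $q_0=2$.
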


Finally let $\mathcal B_m=\mathcal C_m(2)$ and $\mathcal Z_s=\mathcal B_{2s}$ for $s\ge 2$.
The binary configuration supplies additional exact values that do not follow from the general bounds alone.
\begin{theorem}\label{thm:~binary-main}
The following results hold.  
\begin{enumerate}
\item [(i)] For every $m\ge1$,
\begin{equation*}
 \rho_2(\mathcal B_m)=
 \begin{cases}2,&m=1,\\3,&m=2,\\6,&m=3,\\5,&m\ge4.\end{cases}
\end{equation*}

\item [(ii)] If $8\mid m$, then
$\rho_3(\mathcal B_m)=7$. Equivalently, $\rho_3(\mathcal Z_s)=7$
when $4\mid s$.
\end{enumerate}
\end{theorem}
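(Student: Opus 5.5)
We work throughout with the norm-one model. Identify $\F_2^{2m}$ with $\F_{Q}$, $Q=2^{2m}$, and take as columns the set $U=\{x\in\F_Q:~x^{2^m+1}=1\}$ of size $2^m+1$. Then $\rho_t(\mathcal B_m)$ is the least $r$ such that every $t$-dimensional $\F_2$-subspace $W\subseteq\F_Q$ lies in $\spn{u_1,\dots,u_r}$ for some $u_i\in U$. The small cases $m=1,2,3$ of part (i) are finite computations. For $m=1$ we have $U=\F_4^*$, and any two elements of $U$ already span $\F_4$, so $\rho_2=2$. For $m=2,3$ we would argue directly from the structure of $U$ inside $\F_{16}$ and $\F_{64}$, or by exhaustive check. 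In particular, $m=3$ gives $\rho_2=6=2m$, which matches the cap $\min\{2t+1,2m\}$ from Theorem~\ref{thm:~main-bounds}(i) once $m=3$ is allowed. The main work is the case $m\ge4$.

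\textbf{Upper bound $\rho_2\le5$ for $m\ge4$.} Theorem~\ref{thm:~main-bounds}(i) gives $2t+1=5$, but only when $m\ge2t+2=6$. For $m=4,5$ we would rerun the same argument. Given $W=\spn{a,b}$, choose $u_1\in U$ so that $a+u_1$ and $b+u_1$, or suitable combinations of them, are each a sum of two elements of $U$. The condition on $u_1$ becomes a system of norm equations over the curve $x^{2^m+1}=1$. The number of solutions is then estimated by Weil's bound, and positivity is checked directly for $m=4,5$; if the bound is too weak for these two values, they are settled by computer.

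\textbf{Lower bound $\rho_2\ge5$ for $m\ge4$.} We need a $2$-dimensional $W$ not contained in any span of $4$ elements of $U$. Theorem~\ref{thm:~main-bounds}(ii) gives $\rho_2\ge4$ for large $q$, using the counting bound. To push this to $5$ we would refine the count. A $4$-subset of $U$ spans at most $15$ nonzero vectors, and hence contains at most $35$ two-dimensional subspaces. We would sharpen this by excluding subspaces that are already covered by $3$-spans, and count more carefully using the fact that $U$ contains no $4$ elements summing to zero (a Sidon-type property coming from the minimum distance $5$ of the Zetterberg code). The comparison is roughly $\binom{2^m+1}{4}\cdot35$ against $\frac{(2^{2m}-1)(2^{2m}-2)}{6}$, both of order $2^{4m}$, with constants about $35/24$ and $1/6$. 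So naive counting fails, and we need a structural obstruction instead. Our first choice is the subfield $\F_{2^m}$: the sums $u+u^{-1}=\tr$-type elements land in $\F_{2^m}$, and $W\subseteq\F_{2^m}$ or $W=\spn{1,\alpha}$ with a specific $\alpha$ should force $5$ columns. This mirrors how part (iii) uses subfield obstructions. We would verify that four elements of $U$ cannot span a $W$ of the chosen type by analysing the linear relations among $u_1,\dots,u_4$ and their conjugates $u_i^{2^m}=u_i^{-1}$.

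\textbf{Part (ii): $\rho_3=7$ when $8\mid m$.} The upper bound $\rho_3\le7$ is Theorem~\ref{thm:~main-bounds}(i), since $m\ge8=2t+2$. For the lower bound we need a $3$-space not covered by $6$ columns. Here we would use the subfield $\F_{2^{m/4}}$, or more generally the tower given by $8\mid m$, to build an explicit $W$. This is analogous to the proof of Theorem~\ref{thm:~main-bounds}(iii), where $h_2(3)=\min\{\ell\ge3:~2^{\ell-3}\ge1+\ell\}=6$ only yields $\ge6$, so one more unit has to be squeezed out from the extra divisibility. The main obstacle is exactly this last gain of one, in both the $t=2$ and $t=3$ lower bounds. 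Our plan is to show that if $W\subseteq\spn{u_1,\dots,u_6}$, then applying the Frobenius $x\mapsto x^{2^{m/2}}$ (or inversion) to the relations forces a dependency among the $u_i$ of weight at most $4$, contradicting the distance of the code.
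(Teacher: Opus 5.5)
\textbf{Part (i).} The lower bound for $m\ge4$ has a real gap, and the framing is off. For $t=2$ you do not need any ``gain of one'' beyond the Hamming bound. Suppose $W$ is a plane none of whose three nonzero vectors lies in $U\cup(U+U)$. Then any cover $W\subseteq\spn{u_1,\dots,u_\ell}$ lifts to a binary $[\ell,2,\ge3]$ code, and the Hamming bound already forces $\ell\ge5$. Two things are missing from your plan:
\begin{enumerate}
\item[(a)] The criterion for such vectors inside $\F_{2^m}$. Since $\F_{2^m}\cap U=\{1\}$ and $a\in U+U$ iff $\tr(\N(a)^{-1})=1$, an element $a\in\F_{2^m}^*$ avoids $U\cup(U+U)$ iff $a\ne1$ and $\tr(a^{-1})=0$.
\item[(b)] A proof that a plane all of whose nonzero elements satisfy (a) exists for every $m\ge4$. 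Odd $m$ and $m=4$ are not reached by Theorem~\ref{thm:~main-bounds}(iii). The paper counts these planes exactly: the substitution $z=a^{-1}+b^{-1}$, $x=a^{-1}/z$ turns the three trace conditions into conditions linear in $x$, and a Kloosterman-sum correction removes the planes through $1$.
\end{enumerate}
Without (a), your claim that $W\subseteq\F_{2^m}$ ``should force $5$'' is false. Your alternative $W=\spn{1,\alpha}$ certainly fails: $1$ is itself a column and $\alpha$ costs at most $3$, so $\tau_U(W)\le4$.

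For the upper bound at $m=4,5$, the Weil estimate really is too weak. The condition $3L>q+1$ needs $L\ge6$, resp.\ $12$, while Weil gives only $5$, resp.\ $11$. The paper avoids a computer by proving $3\mid |A(a)|$: the triples summing to $a$ are pairwise disjoint, by uniqueness of two-summand representations.

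Finally, $\rho_2(\mathcal B_3)=6$ does not match $\min\{2t+1,2m\}=5$; it exceeds it. The paper obtains $6$ from $\F_4^*\le U$, whose three cosets give $\mathcal B_3\cong\operatorname{Rep}_2(3)^{\oplus3}$.

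\textbf{Part (ii).} Here the proposal contains no working mechanism. You do not specify $W$ or check that it is deep. In the paper this depends on the parity of $m/8$, which is why two witnesses are needed. Both sit in the fixed field $\F_{256}$ rather than a growing subfield like $\F_{2^{m/4}}$, so that the resulting obstruction polynomials do not depend on $m$.

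The proposed contradiction is also unsupported. Applying $x\mapsto x^{2^{m/2}}$ or inversion to $w=\sum e_iu_i$ only produces another six-column cover of $\sigma(W)$ by the images $\sigma(u_i)$. It gives no relation of weight $\le4$ among the $u_i$ themselves. A six-column cover of a deep $3$-space is combinatorially consistent, since it is modelled on the $[6,3,3]$ code, so the obstruction must be arithmetic.

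The paper's route is as follows. The $[6,3,3]$ structure reduces any such cover to one of seven affine planes $\{a,b,c,a+b+c\}\subset W$ with $a+x+y,\ b+x+z,\ c+y+z\in U$ for some $x,y,z\in U$. Eliminating $z$ and then $y$ (using $uv=d/d^q$ whenever $u+v=d$ with $u,v\in U$) yields $R_{a,b,c}(x)=0$, where $R_{a,b,c}=\kappa_TX^2P_T$ and $P_T\in\F_{256}[X]$ is reciprocal of degree $12$. The Meyn--G\"otz criterion then forces any norm-one root of $P_T$ to satisfy $x^{256^j+1}=1$ for a fixed $j\in\{4,6,15\}$. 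Explicit gcd certificates exclude such roots. This uniform control over all extensions $\F_{256^r}$ is what your sketch lacks.
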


Theorem~\ref{thm:~binary-main}(i) has a theoretical proof in every degree. Part~(ii) uses explicit polynomial certificates over $\F_{256}$ together with a uniform extension-degree argument. Table~\ref{tab:~all-results} records the bounds and exact values of covering radius, with their parameter restrictions. All orders are between one and the redundancy unless stated otherwise.

\begingroup
\small\setlength{\tabcolsep}{4pt}\renewcommand{\arraystretch}{1.55}
\setlength{\LTpre}{.5\baselineskip}\setlength{\LTpost}{.5\baselineskip}
\begin{longtable}{@{}
>{\raggedright\arraybackslash}p{16mm}
>{\raggedright\arraybackslash}p{60mm}
>{\raggedright\arraybackslash}p{55mm}
>{\raggedright\arraybackslash}p{20mm}@{}}
\caption{Generalized covering radii of generalized Zetterberg codes}\label{tab:~all-results}\\
\toprule Code & Condition & Generalized covering radius & Ref. \\\midrule
\endfirsthead
\bottomrule\endlastfoot
$\mathcal C_m(q_0)$ & $q_0\ge3$ & $\rho_t\le\min\{2t+1,2m\}$ & Thm.~\ref{thm:~main-bounds}(i)\\
$\mathcal B_m$ & $m\ge2t+2$ & $\rho_t\le2t+1$ & Thm.~\ref{thm:~main-bounds}(i)\\
$\mathcal C_m(q_0)$ & $2\le t\le m$, $q\ge\Theta_{q_0,t}$ & $\rho_t\ge2t$; $\rho_t\in\{2t,2t+1\}$ if $q_0\ge3$ & Thm.~\ref{thm:~main-bounds}(ii)\\ 
$\mathcal C_{2s}(q_0)$ & $t\le s-1$ if $q_0$ is even; $t\le s$ if $q_0$ is odd & $\rho_t\ge h_{q_0}(t)$ & Thm.~\ref{thm:~main-bounds}(iii)\\ \hline 
$\mathcal C_m(q_0)$ & $\rho_1=2$, $2\le t\le m$, $q\ge\Theta_{q_0,t}$ & $\rho_t=2t$ & Cor.~\ref{cor:~exact-2t}\\
$\mathcal C_m(q_0)$ & $\Delta_t>0$ & $\rho_t=t$ & Thm.~\ref{thm:~main-high}\\
$\mathcal C_{2s}(q_0)$ & $s\ge2$; $t\ge3s+2$, or $q_0\ge4$ and $t\ge3s+1$ & $\rho_t=t$ for $t\le4s$ & Thm.~\ref{thm:~main-high}\\
$\mathcal C_{2s+1}(q_0)$ & $s\ge1$; $t\ge3s+2$ if $q_0\ge16$; $t\ge3s+3$ if $3\le q_0<16$; $t\ge3s+4$ if $q_0=2$ & $\rho_t=t$ for $t\le4s+2$ & Cor.~\ref{cor:~odd-high}\\
$\mathcal B_m$ & $m=1$; $m=2$; $m=3$; $m\ge4$ & $\rho_2=2$; $3$; $6$; $5$ & Thm.~\ref{thm:~binary-main}(i)\\
$\mathcal Z_s$ &  $4\mid s$ & $\rho_3=7$ & Thm.~\ref{thm:~binary-main}(ii)\\
\end{longtable}
\endgroup

The rest of the paper is organized as follows. In Section~\ref{sec:~preliminaries}, we recall the necessary background on generalized covering radius and generalized Zetterberg codes. In Section~\ref{sec:~bounds}, we proves the general upper and lower bounds in Theorem \ref{thm:~main-bounds}. In Section~\ref{sec:~high}, we give a proof of Therem \ref{thm:~main-high}. In Section~\ref{sec:~binary}, we use additional binary structure to determine the exact second radii and the third radius on an infinite subfamily. In Section~\ref{sec:~conclusion}, we conclude this paper.

\section{Preliminaries}\label{sec:~preliminaries}
In this section we provide the necessary background material for the rest of the paper. Throughout the paper, let $q_0$ be a prime power and $m$ a positive integer. Put
\(q=q_0^m,\ \F_{q_0}\subseteq\F_q\subseteq\F_{q^2},\) and the norm-one group \(U=\{u\in\F_{q^2}^*:~u^{q+1}=1\}.\)
All dimensions and linear spans are over $\F_{q_0}$ unless another field is specified.
In particular, a $t$-dimensional subspace corresponds to a projective $(t-1)$-space.
We write $[n]=\{1,\ldots,n\}$ and $\F_a^*=\F_a\setminus\{0\}$. 

\subsection{Linear codes and generalized Zetterberg codes}\label{subsec1}

We first recall some standard notation from coding theory. An \([n,n-k]_{q_0}\) linear code is an \((n-k)\)-dimensional \(\F_{q_0}\)-subspace \(C\le \F_{q_0}^{n}\). The elements of \(C\) are called \emph{codewords}. The support of a vector is the set of its nonzero coordinates, its Hamming weight is the size of this support, and the Hamming distance between two vectors is the weight of their difference. The minimum distance of \(C\) is \(d(C)=\min\{\wt_H(c):~c\in C,\ c\ne0\}.\) When \(d(C)=d\), we write \(C\) as an \([n,n-k,d]_{q_0}\) code. We write $[n,k,\ge d]_{q_0}$ when only a lower bound on the minimum distance is given.
A full-rank matrix \(H\in\F_{q_0}^{k\times n}\) is called a \emph{parity-check matrix} of \(C\) if \(C=\{x\in\F_{q_0}^{n}:~Hx^T=0\}.\)
Let \(H=(\mathbf{h}_1,\ldots,\mathbf{h}_n)\), where \(\mathbf{h}_i\in\F_{q_0}^{k}\). The \emph{syndrome} of a vector
\(x=(x_1,\ldots,x_n)\in\F_{q_0}^{n}\) is
\[Hx^T=\sum_{i=1}^n x_i \mathbf{h}_i\in\F_{q_0}^{k}.\]
Thus, \(x\in C\) if and only if \(Hx^T=0\). The space \(\F_{q_0}^{k}\) will be referred to as the \emph{syndrome space} of \(C\).
In particular, a syndrome \(s\in\F_{q_0}^{k}\) can be represented using a set of coordinates \(I\subseteq[n]\) precisely when
\(s\in\left\langle \mathbf{h}_i:~i\in I\right\rangle_{\F_{q_0}}.\)
This interpretation is the form used throughout the paper and leads directly to the definition of generalized covering radii in the next subsection.

The multiplicative group $U$ is cyclic group of order $q+1$. Fix a generator $\alpha$. The generalized Zetterberg code over $\F_{q_0}$ is
\begin{equation*}
 \Cm=\left\{(c_u)_{u\in U}\in\F_{q_0}^{\,q+1}:~\sum_{u\in U}c_u u=0\right\}.
\end{equation*}
It is a $[q+1,q+1-2m]_{q_0}$ cyclic code generated by the minimal polynomial of $\alpha$ over $\F_{q_0}$ \cite{SHO2023,XiongYan2026}. 
More precisely, choose an $\F_{q_0}$-basis of $\F_{q^2}$ and let $\phi:~\F_{q^2}\to\F_{q_0}^{\,2m}$ be the associated coordinate isomorphism. Then
\[H=\bigl(\phi(1)\ \phi(\alpha)\ \cdots\ \phi(\alpha^q)\bigr)\]
is a parity-check matrix over $\F_{q_0}$. We identify its columns with the corresponding elements of $U$, but all column combinations have coefficients in $\F_{q_0}$.

The binary specialization will be denoted by \(\mathcal B_m=\mathcal C_m(2).\)
For $s\ge2$, the even-degree subfamily $\mathcal Z_s=\mathcal B_{2s}$ is the classical double-error-correcting Zetterberg family. It has parameters $[2^{2s}+1,2^{2s}+1-4s,5]_2$ and covering radius three, and is therefore quasi-perfect~\cite{Zetterberg1962,Dodunekov1985}. More generally, $\mathcal B_m$ has minimum distance three for odd $m$ and five for even $m$. Its ordinary covering radius is one for $m=1$, two for $m=2$, and three for $m\ge3$~\cite{XiongYan2026}.

Two units $u,v\in U$ represent the same projective column direction if and only if $u/v\in\F_{q_0}^*$ and $(u/v)^2=1$. Hence each direction has $\delta=\gcd(2,q_0-1)$ representatives in $U$, and there are $n_0=(q+1)/\delta$ distinct directions. Fix a representative set $U_0\subseteq U$, taking $U_0=U$ when $q_0$ is even and $U_0=\{1,\alpha,\ldots,\alpha^{n_0-1}\}$ when $q_0$ is odd.
In the latter case $U=U_0\sqcup(-U_0)$. The coordinate vectors of $U_0$ form a matrix $H_0$ of  rank $2m$, called the reduced parity-check matrix here. These representatives are used in column counts and shortening arguments.

\subsection{Generalized covering radii}\label{subsec:~covering-definitions}

We recall the definition of generalized covering radii introduced
in~\cite{EFS2021}.

\begin{definition}
Let \(C\) be an \([n,n-k]_{q_0}\) linear code with a parity-check matrix \(H=(h_1,\ldots,h_n)\in\F_{q_0}^{\,k\times n}\). For \(1\le t\le k\), the $t^{\rm th}$ generalized covering radius of \(C\) is
\begin{equation}\label{eq:~general-rho}
 \rho_t(C)= \max_{s_1,\ldots,s_t\in\F_{q_0}^{k}} \min\left\{ |I|:~ I\subseteq[n],
 ~s_1,\ldots,s_t\in \left\langle h_i:~i\in I\right\rangle_{\F_{q_0}} \right\}.
\end{equation}
\end{definition}

Thus, \(\rho_t(C)\) is the smallest integer \(r\) such that any \(t\) syndromes can be generated simultaneously by a common set of at most \(r\) columns of \(H\). The coefficients used to represent the individual syndromes may be different. This parameter is independent of the choice of a parity-check matrix, and \(\rho_1(C)\) is the ordinary covering radius. 
There are also several equivalent definitions of generalized covering radii for linear codes in \cite{EFS2021}. We shall repeatedly use the standard properties \cite{EFS2021}
\begin{equation}\label{eq:~basic}
 t\le\rho_t(C)\le k,~~
 \rho_t(C)\le\rho_{t+1}(C),~~
 \rho_{a+b}(C)\le\rho_a(C)+\rho_b(C),
\end{equation}
whenever the relevant indices do not exceed \(k\).

Recently, a geometric framework for generalized covering radius was developed in~\cite{AMNT2026}. They proved that it is convenient to work directly with subspaces since a set of \(t\) syndromes spans a subspace of dimension at most \(t\).
A \(t\)-tuple of syndromes spans a subspace of dimension at most \(t\), which can be extended to a \(t\)-dimensional subspace. Conversely, a basis of any \(t\)-dimensional subspace is an admissible \(t\)-tuple in~\eqref{eq:~general-rho}. Hence
\[ \rho_t(C)= \max_{\substack{S\le\F_{q_0}^{k}\\ \dim S=t}}\min\left\{|I|:~I\subseteq[n],
\ S\subseteq\left\langle h_i:~i\in I\right\rangle_{\F_{q_0}} \right\}.\]
This viewpoint is closely related to the geometric framework of \((r,t)\)-saturating sets developed in~\cite{AMNT2026}.
Let \(\mathcal H=\{h_1,\ldots,h_n\}\) denote the column configuration of \(H\). Since a \(t\)-dimensional subspace cannot be spanned by fewer than \(t\) columns, the minimum possible value of the $t^{\rm th}$ generalized covering radius is \(t\). Moreover,
\begin{equation}\label{eq:~internal-spanning}
 \rho_t(C)=t~~\Longleftrightarrow~~
 \left\langle\mathcal H\cap S\right\rangle_{\F_{q_0}}=S
 ~\text{for every \(t\)-dimensional subspace } S\le\F_{q_0}^{k}.
\end{equation}
Indeed, if \(S\) is covered by exactly \(t\) columns, then those columns form a basis of \(S\) and hence lie in \(S\). Conversely, if the columns contained in \(S\) span \(S\), then \(S\) contains a column basis of size \(t\). In the terminology of~\cite{AMNT2026}, \(\mathcal H\) is \((r,t)\)-saturating when \(\rho_t(C)\le r\), while the condition on the right-hand side of \eqref{eq:~internal-spanning} is the \(t\)-strong blocking property.

For the generalized Zetterberg code, we identify the syndrome space with the \(2m\)-dimensional \(\F_{q_0}\)-space \(\F_{q^2}\), and the columns of the parity-check matrix with the elements of \(U\). For an \(\F_{q_0}\)-subspace \(W\le\F_{q^2}\), define
\[ \tau_U(W)= \min\left\{ |I|:~ I\subseteq U,\; W\subseteq\langle I\rangle_{\F_{q_0}} \right\}.\]
Then
\begin{equation}\label{eq-gcv3}
 \rho_t(\mathcal C_m(q_0))=\rho_t(U) := \max_{\substack{W\le\F_{q^2}\\ \dim W=t}} \tau_U(W).
\end{equation}
In particular, \(\rho_t(\mathcal C_m(q_0))=t\) in and only if \(\spnq{U\cap W}=W\) for every \(t\)-dimensional \(W\le\F_{q^2}\).

A nonzero syndrome \(a\in\F_{q^2}\) is called \emph{deep} if \(\tau_U(\F_{q_0}a)=\rho_1(U)\). A subspace is called a \emph{deep-syndrome subspace} if all of its nonzero elements are deep. In particular, when \(\rho_1(U)=3\), a nonzero syndrome is deep precisely when it cannot be represented using one or two columns of \(U\).
The formulation~\eqref{eq-gcv3} will be used throughout the paper. To prove an upper bound \(\rho_t(\mathcal C_m(q_0))\le r\), it is enough to show that \(\tau_U(W)\le r\) for every \(t\)-dimensional subspace \(W\le\F_{q^2}\). To prove a lower bound \(\rho_t(\mathcal C_m(q_0))\ge r\), it is enough to exhibit one \(t\)-dimensional subspace \(W\) with \(\tau_U(W)\ge r\).
Consequently, an exact value \(\rho_t(\mathcal C_m(q_0))=r\) follows by establishing these two bounds simultaneously.

\subsection{Rational functions and degree-two covers}
We recall a few standard facts about rational functions and degree-two covers that will be used later in converting character sums into point counts on algebraic curves. Whenever a curve is given by an affine equation, \(C\) denotes its smooth projective model.

Let \(K\) be a field and let \(g\in K(T)\) be nonzero. For \(P\in\PP^1(\overline K)\), choose a local parameter \(t_P\) at \(P\). Then \(g=t_P^m u\) and \(u(P)\ne0\) for a unique integer \(m\). This integer is the order of \(g\) at \(P\), denoted by \({\rm ord}_P(g)\). Thus, \({\rm ord}_P(g)=d>0\) means that \(g\) has a zero of order \(d\) at \(P\), while \({\rm ord}_P(g)=-d<0\) means that \(g\) has a pole of order \(d\).
At infinity, one may use \(S=1/T\) as a local parameter. Hence
\({\rm ord}_\infty(g) ={\rm ord}_{S=0}\bigl(g(1/S)\bigr).\)
In particular, if \(g(T)=\frac{P(T)}{Q(T)}\) with coprime polynomials \(P,Q\in K[T]\), then
\({\rm ord}_\infty(g)=\deg Q-\deg P.\)
We shall also use that an invertible M\"obius change of variable preserves zero and pole orders. More precisely, if
\[\phi(T)=\frac{aT+b}{cT+d}, \quad ad-bc\ne0,\]
then
\({\rm ord}_P(g\circ\phi) = {\rm ord}_{\phi(P)}(g).\)

We next recall the two types of degree-two covers that occur below. Suppose first that \(\operatorname{char}K=2\), and consider the Artin-Schreier curve
\begin{equation*}
 C:~ Y^2+Y=g(T).
\end{equation*}
Replacing \(g\) by \(g+h^2+h\) does not change the corresponding Artin-Schreier extension. Thus, \(g\) may be taken in reduced form, in which all pole orders are odd. If these pole orders are \(d_1,\ldots,d_\ell\), then by the Artin-Schreier genus formula \cite[Lemma~2.6]{PriesZhu2012}, we have
\begin{equation}\label{eq:~AS-genus}
 g(C) = \frac{\sum_{i=1}^{\ell}(d_i+1)-2}{2}.
\end{equation}
In particular, two simple poles give \(g(C)=1\).
Moreover, the presence of a simple pole shows that \(g\) cannot be of the form \(h^2+h\), and hence  the corresponding Artin-Schreier cover is geometrically irreducible.

Now suppose that \(\operatorname{char}K\ne2\), and consider
\begin{equation*}
 C:~ Y^2=g(T).
\end{equation*}
Assume that \(g\) is not a square in \(\overline K(T)\), so that the cover is geometrically irreducible. A point \(P\in\PP^1(\overline K)\) is a branch point precisely when \({\rm ord}_P(g)\) is odd. The point of \(C\) lying above such a branch point is ramified with ramification index \(2\). Thus, simple zeros and simple poles give branch points, whereas zeros and poles of even order are unramified.
For the degree-two map \(\pi:~C\to\PP^1\), it follows from the Riemann-Hurwitz formula \cite[Chapter~III]{Stichtenoth2009} that
\begin{equation}\label{eq:~RH-double}
 2g(C)-2 = 2\bigl(2g(\PP^1)-2\bigr) +\sum_{Q\in C}(e_Q-1),
\end{equation}
where \(e_Q\) is the ramification index at \(Q\). Since \(g(\PP^1)=0\), each ramified point contributes one to the last sum. Hence a quadratic cover with four branch points has genus one.

Finally, if \(C/\F_q\) is a smooth projective geometrically irreducible curve of genus \(g\), then by the Hasse-Weil bound~\cite{Poonen2006}, we have
$
 \bigl|\#C(\F_q)-(q+1)\bigr|\le2g\sqrt q.
$
In particular, for a genus-one curve,
\begin{equation}\label{eq:~HW-genus-one}
 \bigl|\#C(\F_q)-(q+1)\bigr| \le2\sqrt q.
\end{equation}

\section{Upper and lower bounds on the generalized covering radii of generalized Zetterberg codes}\label{sec:~bounds}

In this section, we prove Theorem~\ref{thm:~main-bounds}. The two lower bounds come from different sources. One is obtained by counting the subspaces generated by small sets of columns and comparing them with all possible target subspaces. The other is based on constructing subspaces whose nonzero vectors admit no sufficiently short column representation. For the upper bound, we first identify a common summand that works for many vectors and then choose a basis adapted to that summand.
We begin by fixing some notation used throughout the section. Let
\[ \N(x)=\N_{\F_{q^2}/\F_q}(x)=x^{q+1}\]
denote the norm from \(\F_{q^2}\) to \(\F_q\). In particular, \(\N(a)=a^2\) for \(a\in\F_q\), while \(u^q=u^{-1}\) for \(u\in U\). The nonzero syndromes representable by a single column are exactly
\begin{equation*}
 \F_{q_0}^*U = \left\{a\in\F_{q^2}^*:~\N(a)\in(\F_{q_0}^*)^2 \right\}.
\end{equation*}
Indeed, \(\N(\lambda u)=\lambda^2\) for \(\lambda\in\F_{q_0}^*\), and conversely \(\N(a)=\lambda^2\) implies \(a/\lambda\in U\).

For nested finite fields, we write \(\Tr_{\F_b/\F_a}\) for the field trace. When \(q\) is even, we abbreviate \(\tr=\Tr_{\F_q/\F_2}\). In particular, \(\tr(z^2)=\tr(z)\). When \(q\) is odd, \(\chi\) denotes the quadratic character of \(\F_q\), extended by \(\chi(0)=0\).
In characteristic \(p\), the canonical additive character of \(\F_{p^e}\) is
\[ z\mapsto \exp\!\left(\frac{2\pi i}{p} \Tr_{\F_{p^e}/\F_p}(z)\right).\]
All traces and characters appearing in the representation counts below are taken on \(\F_q\), rather than on the syndrome space \(\F_{q^2}\).

\subsection{Counting and subfield lower bounds}
The counting argument is most effective when the extension degree is large relative to \(t\), whereas the subfield construction below does not require such a size condition. We first prove the counting bound. The underlying ordered-tuple argument is standard in covering problems; see, for example, \cite[Theorem~IV.1]{XiongYip2026} and \cite[Theorem~6]{LiXiong2026}. Here we apply it directly to the distinct projective columns of the Zetterberg configuration.

\begin{proof}[Proof of Theorem~\ref{thm:~main-bounds}\textup{(ii)}]
Let $2\le t\le m$ and suppose, to the contrary, that $\rho_t(\Cm)\le r$, where $r=2t-1$. Recall that $U_0\subseteq U$ contains one representative of each projective column direction. Hence
\(|U_0|=n_0=\frac{q+1}{\delta},\) where \(\delta=\gcd(2,q_0-1).\) Replacing a column by a nonzero $\F_{q_0}$-multiple and removing repeated directions preserve its span. Thus, the covering assumption also holds for $U_0$. Moreover, $U_0$ spans the $2m$-dimensional $\F_{q_0}$-space $\F_{q^2}$, and hence $n_0\ge2m>r\ge3$.

For each \(r\)-subset \(I\subseteq U_0\), let \(V_I=\spnq{I}\). By the definition of the generalized covering radius, every ordered \(t\)-tuple \((s_1,\ldots,s_t)\in(\F_{q^2})^t\) is contained in the span of a common set of at most \(r\) columns from \(U_0\). Since \(r<n_0\), any smaller covering set may be enlarged to an \(r\)-subset. Consequently,
\[(\F_{q^2})^t =\bigcup_{\substack{I\subseteq U_0\\|I|=r}}V_I^t,\]
where $V_I^t$ denotes the $t$-fold Cartesian product of $V_I$.
The tuples here need not consist of distinct or linearly independent syndromes. There are exactly $q^{2t}$ such tuples, whereas a fixed $I$ covers \(|V_I^t|=|V_I|^{t}\le q_0^{rt}\) tuples. Taking cardinalities and allowing overlaps among the sets $V_I^t$, we obtain
\begin{equation}\label{eq:~count-necessary}
 q^{2t} \le \sum_{\substack{I\subseteq U_0\\|I|=r}}q_0^{t\dim_{\F_{q_0}}V_I}
 \le \binom{n_0}{r}q_0^{rt}.
\end{equation}

It remains to estimate the binomial coefficient. Since \(n_0=(q+1)/\delta\), 
\[\binom{n_0}{r}=\frac{1}{\delta^r r!}\prod_{j=0}^{r-1}(q+1-\delta j).\]
Because \(r\ge3\) and \(\delta\ge1\), the first three factors are at most \(q+1\), \(q\), and \(q-1\), respectively, while every remaining factor is at most \(q\). Hence
\[ \binom{n_0}{r} \le \frac{(q+1)q(q-1)q^{r-3}}{\delta^r r!} < \frac{q^r}{\delta^r r!},\]
where the strict inequality follows from $(q+1)q(q-1)=q^3-q<q^3$.
Substituting this estimate into~\eqref{eq:~count-necessary} and using
\(r=2t-1\), we get
\[ q^{2t} < q^{2t-1}\frac{q_0^{t(2t-1)}}{\delta^{2t-1}(2t-1)!} = q^{2t-1}\Theta_{q_0,t}.\]
Thus, $q<\Theta_{q_0,t}$, contradicting the assumption $q\ge\Theta_{q_0,t}$. It follows that $\rho_t(\Cm)\ge2t$.
\end{proof}

We next derive the second lower bound by constructing subspaces whose nonzero vectors admit no representation using at most two columns. A common cover of such a subspace naturally gives rise to a short linear code of minimum distance at least three, to which the Hamming bound can be applied.

\begin{proof}[Proof of Theorem~\ref{thm:~main-bounds}\textup{(iii)}]
Let \(m=2s\). Then \(q=q_0^{2s}\), and \([\F_q:\F_{q_0^s}]=2\). By the syndrome-subspace formulation of the generalized covering radius in \eqref{eq-gcv3}, it is enough to construct a \(t\)-dimensional subspace \(W\le\F_q\) such that \(\tau_U(W)\ge h_{q_0}(t)\).

We first isolate the obstruction to one- and two-column
representations.

\begin{claim}\label{claim1}
Let \(a\in\F_q\setminus\F_{q_0}\) satisfy
\(a^2\in\F_{q_0^s}\). Then \(a\) cannot be represented as an
\(\F_{q_0}\)-linear combination of at most two elements of \(U\).
\end{claim}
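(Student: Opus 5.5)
The plan is to assume a short representation, apply the Frobenius $z\mapsto z^q$ to it, and use the hypothesis $a^2\in\F_{q_0^s}$ to show that the resulting trace parameters force the columns into $U\cap\F_q=\{\pm1\}$. That contradicts $a\notin\F_{q_0}$. The basic tool is that $a\in\F_q$ is fixed by $z\mapsto z^q$, while $u^q=u^{-1}$ for $u\in U$.

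\emph{One column.} Suppose $a=\lambda u$ with $\lambda\in\F_{q_0}^*$ and $u\in U$. Then $a^2=\N(a)=\lambda^2$, so $a=\pm\lambda\in\F_{q_0}$, which is excluded. (Note $a\ne0$, since $a\notin\F_{q_0}$.)

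\emph{Two columns.} Suppose $a=\lambda u+\mu v$ with $\lambda\mu\ne0$. Conjugating gives $a=\lambda u^{-1}+\mu v^{-1}$. Put $x=u+u^{-1}$ and $y=v+v^{-1}$; both lie in $\F_q$. The key fact is that $u\in U\setminus\{\pm1\}$ forces $X^2-xX+1$ to be irreducible over $\F_q$, and likewise for $v$. I will show that the hypothesis on $a$ makes this polynomial reducible, which forces $u,v\in\{\pm1\}\subseteq\F_{q_0}$. Then $a=\lambda u+\mu v\in\F_{q_0}$, a contradiction.

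\emph{Odd characteristic.} Adding the two expressions for $a$ gives $2a=\lambda x+\mu y$. Subtracting them gives $\lambda(u-u^{-1})=-\mu(v-v^{-1})$. Squaring the latter yields $\lambda^2(x^2-4)=\mu^2(y^2-4)$, hence $(\lambda x-\mu y)(\lambda x+\mu y)=4(\lambda^2-\mu^2)$. Set $d=\lambda^2-\mu^2$. Then
\[
\lambda x=a+d/a,\qquad \mu y=a-d/a ,
\]
and therefore
\[
\lambda^2(x^2-4)=a^2+2d+d^2/a^2-4\lambda^2\in\F_{q_0^s},
\]
because $a^2\in\F_{q_0^s}$. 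Every element of $\F_{q_0^s}$ is a square in its quadratic extension $\F_q$. So the discriminant $x^2-4$ is a square in $\F_q$, and $X^2-xX+1$ splits over $\F_q$. Hence $u\in\F_q\cap U=\{\pm1\}$, and by symmetry (or by the one-column case applied to $a-\lambda u$) also $v=\pm1$.

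\emph{Characteristic two.} Here the sum trick degenerates, so I substitute instead. From the two expressions for $a$ we get $\lambda x=\mu y$. Now substitute $\lambda u=a+\mu v$ into $u^2+xu+1=0$ and use $v^2=yv+1$; this should give
\[
a^2+\mu y\,a+\lambda^2+\mu^2=0 .
\]
If $y=0$, then $a^2=\lambda^2+\mu^2$ and so $a\in\F_{q_0}$, a contradiction. Otherwise
\[
1/y=\mu a/(a^2+\lambda^2+\mu^2).
\]
In characteristic two, $a^2\in\F_{q_0^s}$ already forces $a\in\F_{q_0^s}$, so $1/y\in\F_{q_0^s}$. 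Then $\tr(1/y)=\Tr_{\F_{q_0^s}/\F_2}\bigl(\Tr_{\F_q/\F_{q_0^s}}(1/y)\bigr)=\Tr_{\F_{q_0^s}/\F_2}(2/y)=0$. Since $X^2+yX+1$ is irreducible over $\F_q$ iff $\tr(1/y^2)=\tr(1/y)=1$, it splits. Hence $v\in\F_q\cap U=\{1\}$, and similarly $u=1$, so again $a\in\F_{q_0}$.

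I expect the main obstacle to be the algebra that eliminates $u$ and $v$ to get $\lambda x$ (or $1/y$) as a rational expression in $a$ with $\F_{q_0}$-coefficients; once that is done, the subfield argument is immediate.
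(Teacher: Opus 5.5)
Your argument is correct and is essentially the paper's: the relation you extract, $\lambda(u+u^{-1})=a+(\lambda^2-\mu^2)/a$, is exactly what the paper obtains from $\N(a-\lambda u)=\mu^2$, and in both proofs the hypothesis $a^2\in\F_{q_0^s}$ is used to show that $X^2-(u+u^{-1})X+1$, whose roots are $u$ and $u^{-1}$, splits over $\F_q$, forcing $u\in\F_q\cap U$. The only difference is that you split by characteristic (a discriminant argument for odd $q$, the trace criterion for $X^2+yX+1$ for even $q$), whereas the paper rescales to $z=a\lambda u$, which satisfies $z^2-(a^2+\lambda^2-\mu^2)z+a^2\lambda^2=0$ with coefficients in $\F_{q_0^s}$ and therefore has its roots in $\F_q$ in every characteristic at once.
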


\begin{proof}[Proof of Claim~\ref{claim1}]
First note that \(\F_q\cap U=\{u\in\F_q:~u^2=1\} \subseteq\F_{q_0}.\)
Indeed, for \(u\in\F_q\) one has \(u^q=u\), while \(u\in U\) means \(u^{q+1}=1\), and hence \(u^2=1\). Thus, \(\F_q\cap U=\{1\}\) in characteristic two and \(\F_q\cap U=\{1,-1\}\) in odd characteristic.
Suppose first that \(a=c_1u\) for some \(c_1\in\F_{q_0}^*\) and \(u\in U\). Since \(a,c_1\in\F_q\), we have \(u=a/c_1\in\F_q\cap U\subseteq\F_{q_0}\), and hence \(a\in\F_{q_0}\), a contradiction. Thus, \(a\) has no one-column representation.

Now suppose that \(a=c_1u+c_2v\), where \(c_1,c_2\in\F_{q_0}^*\) and \(u,v\in U\). We may assume both coefficients are nonzero, since otherwise this reduces to the one-column case. Using \(u^q=u^{-1}\) and \(\N(a-c_1u)=\N(c_2v)=c_2^2,\) we obtain
\(a^2+c_1^2-ac_1(u+u^{-1})=c_2^2.\)
After multiplying by \(u\) and dividing by \(ac_1\ne0\), this gives
\begin{equation}\label{eq:~weighted-quadratic}
 u^2-\frac{a^2+c_1^2-c_2^2}{ac_1}u+1=0.
\end{equation}

To treat both characteristics uniformly, set \(z=ac_1u\). Then quation ~\eqref{eq:~weighted-quadratic} becomes
$
 z^2-(a^2+c_1^2-c_2^2)z+a^2c_1^2=0.
$
By assumption, all coefficients belong to \(\F_{q_0^s}\). Since \(\F_q\) is the quadratic extension of \(\F_{q_0^s}\), every quadratic polynomial over \(\F_{q_0^s}\) splits over \(\F_q\). Hence \(z\in\F_q\), including the repeated-root case. It follows that \(u=z/(ac_1)\in\F_q\cap U\), and then \(v=(a-c_1u)/c_2\in\F_q\cap U\) as well. Thus, \(u,v\in\F_{q_0}\), which forces \(a\in\F_{q_0}\), again a contradiction.
\end{proof}

We now construct subspaces all of whose nonzero elements satisfy Claim~\ref{claim1}.

\begin{claim}\label{claim2}
If \(q_0\) is even, then for every \(1\le t\le s-1\) there exists a \(t\)-dimensional subspace \(W\le\F_q\) whose nonzero elements satisf Claim~\ref{claim1}. If \(q_0\) is odd, the same conclusion holds for every \(1\le t\le s\).
\end{claim}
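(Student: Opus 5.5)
We need a $t$-dimensional $\F_{q_0}$-subspace $W\le\F_q$ such that every nonzero $a\in W$ satisfies $a\notin\F_{q_0}$ and $a^2\in\F_{q_0^s}$. Write $K=\F_{q_0^s}$, so that $[\F_q:K]=2$. The plan is to split into the odd and even characteristic cases, since the condition $a^2\in K$ behaves very differently in the two.

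\emph{Odd $q_0$.} Fix a nonsquare $\eta\in K^*$ and let $\theta\in\F_q$ satisfy $\theta^2=\eta$. Then $\theta\notin K$, and $\F_q=K\oplus K\theta$. Set $W=K\theta$, which is an $s$-dimensional $\F_{q_0}$-subspace, and choose any $t$-dimensional subspace of it for $1\le t\le s$.

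For nonzero $a=x\theta$ with $x\in K^*$, we have $a^2=x^2\eta\in K$. Moreover $a\notin K$, since $\theta\notin K$, and hence $a\notin\F_{q_0}\subseteq K$. This gives the odd case immediately.

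\emph{Even $q_0$.} Here the squaring map is additive and bijective on $\F_q$. Since $K$ is closed under Frobenius, $a^2\in K$ forces $a\in K$. So $W$ must be a $t$-dimensional $\F_{q_0}$-subspace of $K$ with $W\cap\F_{q_0}=\{0\}$. Such a subspace exists by choosing a complement to the $1$-dimensional subspace $\F_{q_0}$ inside the $s$-dimensional space $K$. That complement has dimension $s-1$, and any $t\le s-1$ dimensional subspace of it works.

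Every nonzero $a$ in this $W$ lies in $K\setminus\F_{q_0}$, so $a^2\in K$ and $a\notin\F_{q_0}$, as Claim~\ref{claim1} requires. The bound $t\le s-1$ is exactly the dimension of this complement, and it is sharp, because $K$ is the full set of admissible elements together with $0$.

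\emph{Main obstacle.} The only delicate point is the odd case. There one must check that $K\theta$ meets $\F_{q_0}$ only in $0$. This holds because $K\theta\cap K=\{0\}$, and that is what allows the larger range $t\le s$. Everything else is direct.
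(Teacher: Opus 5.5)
Your proposal is correct and follows essentially the same construction as the paper: a complement of $\F_{q_0}$ inside $K=\F_{q_0^s}$ when $q_0$ is even, and $W_0=K\theta$ with $\theta^2$ a nonsquare of $K$ when $q_0$ is odd. Your extra observation is also correct and is not made in the paper: in characteristic two, $a^2\in K$ forces $a\in K$, so $t\le s-1$ is the largest dimension the condition of Claim~\ref{claim1} allows.
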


\begin{proof}[Proof of Claim~\ref{claim2}]
Suppose first that \(q_0\) is even. Choose an \(\F_{q_0}\)-linear complement \(W_0\) of \(\F_{q_0}\) in \(\F_{q_0^s}\), so that \(\F_{q_0^s}=\F_{q_0}\oplus W_0\) and \(\dim_{\F_{q_0}}W_0=s-1\). Every nonzero \(a\in W_0\) lies outside \(\F_{q_0}\) and satisfies \(a^2\in\F_{q_0^s}\). Hence Claim~\ref{claim1} applies to every nonzero element of \(W_0\). For \(1\le t\le s-1\), any \(t\)-dimensional subspace \(W\le W_0\) therefore has the required property.

Now suppose that \(q_0\) is odd. Choose a nonsquare \(d\in\F_{q_0^s}^*\), and let \(b\in\F_q\) satisfy \(b^2=d\). Since \(d\) is nonsquare in \(\F_{q_0^s}\), we have \(b\notin\F_{q_0^s}\). The space \(W_0=b\F_{q_0^s}\) is an \(s\)-dimensional \(\F_{q_0}\)-subspace of \(\F_q\), and \(W_0\cap\F_{q_0}=\{0\}\). Indeed, if \(by\in\F_{q_0}^*\) for some \(y\in\F_{q_0^s}^*\), then \(b=(by)/y\in\F_{q_0^s}\), a contradiction. Moreover, for \(a=by\in W_0\) we have \(a^2=dy^2\in\F_{q_0^s}\). Thus, every nonzero element of \(W_0\) satisfies Claim~\ref{claim1}. Hence, for \(1\le t\le s\), any \(t\)-dimensional subspace \(W\le W_0\) has the desired property.
\end{proof}

It remains to convert this pointwise obstruction into a lower bound on the size of a common cover. Let \(W\) be a \(t\)-dimensional subspace supplied by Claim~\ref{claim2}, and let \(I=\{u_1,\ldots,u_\ell\}\subseteq U\) be any common cover of \(W\). Thus, \(W\subseteq\spnq{I}\). Define the map  $\sigma$ from $\F_{q_0}^{\ell}$ to $\F_{q^2}$ such that 
\(\sigma(e_1,\ldots,e_\ell) =\sum_{j=1}^{\ell}e_ju_j.\)
Choose a basis \(w_1,\ldots,w_t\) of \(W\), together with preimages \(e^{(1)},\ldots,e^{(t)}\in\F_{q_0}^{\ell}\) satisfying \(\sigma(e^{(i)})=w_i\). Then
\(D=\spnq{e^{(1)},\ldots,e^{(t)}} \le\F_{q_0}^{\ell}\)
has dimension \(t\), and the restriction \(\sigma|_D:~D\to W\) is an isomorphism. Indeed, any linear relation among the \(e^{(i)}\)'s would map to the same relation among the basis vectors \(w_i\).

For every nonzero \(e\in D\), the vector \(\sigma(e)\) is therefore a nonzero element of \(W\). If \(e\) had Hamming weight at most two, then \(\sigma(e)\) would be an \(\F_{q_0}\)-linear combination of at most two columns from \(I\), contradicting Claim~\ref{claim1}. Hence every nonzero codeword of \(D\) has Hamming weight at least three, so \(D\) is a linear \([\ell,t,\ge3]_{q_0}\) code.
By the Hamming bound~\cite[Theorem~1.12.1]{HuffmanPless2003}, we have \(q_0^t\bigl(1+(q_0-1)\ell\bigr)\le q_0^\ell,\) or equivalently \(q_0^{\ell-t}\ge1+(q_0-1)\ell\). By the definition of \(h_{q_0}(t)\), we obtain \(\ell\ge h_{q_0}(t)\).
Since \(I\) was an arbitrary common cover of \(W\), we have \(\tau_U(W)\ge h_{q_0}(t),\) and hence, by~\eqref{eq-gcv3}, we arrive at 
\(\rho_t(\mathcal C_{2s}(q_0)) \ge\tau_U(W) \ge h_{q_0}(t).\)
This proves~\eqref{eq:~h-lower} in the stated ranges.
\end{proof}

\begin{remark}\label{rem:~short-code-lifting}
The lifting argument in the last part of the proof does not depend on the assumption that \(m\) is even. More generally, let \(W\le\F_{q^2}\) be any \(t\)-dimensional \(\F_{q_0}\)-subspace, and suppose that every nonzero element of \(W\) requires at least \(d\) columns of \(U\).
If \(W\) is covered by \(\ell\) columns, then the same construction produces a linear
\([\ell,t,\ge d]_{q_0}\)
code. Thus, any lower bound on the length of such a linear code gives a corresponding lower bound on \(\tau_U(W)\). We will use this observation again in the binary case.
\end{remark}

\subsection{Upper bound}

For the upper bound, we do not need to prescribe a basis of the target
subspace in advance. Instead, we first find a column that occurs as a
common summand in the representations of sufficiently many vectors,
and then choose a basis among those vectors.

For
\(a\in\F_{q^2}\setminus(\F_{q_0}^{*}U\cup\{0\})\), define
\begin{equation}\label{eq:~anchor}
 A(a)=\{x\in U:~a=x+y+z\text{ for some }y,z\in U\}.
\end{equation}
Thus, \(A(a)\) consists of the columns that may serve as a prescribed
common summand in a three-column representation of \(a\).
Vectors in the scalar cone \(\F_{q_0}^{*}U\) are excluded because
they already admit a one-column representation.

\begin{proposition}\label{thm:~basis-selection}
Let \(1\le t\le2m\), and suppose that
\(|A(a)|\ge L\) for every
\(a\in\F_{q^2}\setminus(\F_{q_0}^{*}U\cup\{0\})\).
If
\begin{equation}\label{eq:~abstract-incidence}
 (q_0^t-1)L>(q+1)(q_0^{t-1}-1),
\end{equation}
then
\[ \rho_t\bigl(\mathcal C_m(q_0)\bigr)\le2t+1.\]
\end{proposition}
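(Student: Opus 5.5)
Fix a $t$-dimensional subspace $W\le\F_{q^2}$. The aim is to find a column $x\in U$ and a basis $w_1,\dots,w_t$ of $W$ such that each $w_i$ is either itself a one-column syndrome or satisfies $w_i-x\in\F_{q_0}$-span of two columns. In the latter case $w_i=x+y_i+z_i$. Then $W\subseteq\spnq{x,y_1,z_1,\dots,y_t,z_t}$, which uses at most $2t+1$ columns, so $\tau_U(W)\le 2t+1$. Taking the maximum over $W$ and applying \eqref{eq-gcv3} gives $\rho_t(\mathcal C_m(q_0))\le 2t+1$.

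\textbf{Step 1 (trivial case).} If $W\cap \F_{q_0}^*U$ spans $W$, then $W$ is already covered by $t$ columns and we are done. In general we allow a mixed basis, so the basis elements lying in $\F_{q_0}^*U$ cost one column each and need no anchor. Let $B=W\setminus(\F_{q_0}^*U\cup\{0\})$ be the set of ``bad'' vectors.

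\textbf{Step 2 (double counting).} Count incidences between $B$ and $U$ given by $x\in A(a)$. By hypothesis this number is at least $|B|L$. Choose $x\in U$ maximizing $N_x=|\{a\in B: x\in A(a)\}|$; then $N_x\ge |B|L/(q+1)$.

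\textbf{Step 3 (spanning).} We want the set $S_x=\{a\in B:x\in A(a)\}\cup(W\cap\F_{q_0}^*U\setminus\{0\})$ to span $W$. If it did not, it would lie inside a hyperplane $H$ of $W$ with $|H\setminus\{0\}|=q_0^{t-1}-1$. Every vector of $W\setminus H$ is then bad and not anchored at $x$, which forces $B\setminus S_x\supseteq W\setminus H$.

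To get a contradiction, bound $N_x\le |B\cap H|$ and use $N_x\ge |B|L/(q+1)$. This gives $|B|L\le (q+1)|B\cap H|\le (q+1)(q_0^{t-1}-1)$. If all of $W\setminus\{0\}$ were bad, then $|B|=q_0^t-1$, and this inequality contradicts \eqref{eq:~abstract-incidence}.

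When some vectors are one-column syndromes, the count must be refined. Such vectors lying outside $H$ are impossible, since they belong to $S_x\subseteq H$. Hence $B\supseteq W\setminus H$, and the cleanest route is to apply Step 2 to the sum over bad $a$ and compare with the all-bad bound. This bookkeeping is the main obstacle, in particular making the averaging argument work when $|B|<q_0^t-1$.

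I would instead phrase the argument fractionally: average over all $x$ the number of $a\in W\setminus\{0\}$ that are ``good for $x$'', meaning either $a\in\F_{q_0}^*U$ or $x\in A(a)$. Each nonzero $a$ is good for at least $L$ values of $x$; for one-column syndromes, which are good for all $q+1\ge L$ values, this holds trivially. So some $x$ is good for at least $(q_0^t-1)L/(q+1)>q_0^{t-1}-1$ vectors. Since a hyperplane of $W$ contains only $q_0^{t-1}-1$ nonzero vectors, the good vectors for this $x$ cannot lie in a hyperplane, so they span $W$. Extracting a basis from them and applying the opening construction finishes the proof.
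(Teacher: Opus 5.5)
Your final ``fractional'' argument is correct, and it runs on the same double count as the paper's proof: sum incidences over the $q+1$ columns, and note that more than $q_0^{t-1}-1$ nonzero vectors of $W$ cannot lie in a hyperplane. You differ from the paper in how you treat vectors of $W$ that lie in the cone $\F_{q_0}^*U$. The paper runs the count only for subspaces with $W\cap\F_{q_0}^*U=\varnothing$. It then handles the general case by induction on dimension: it splits off a column $u\in W$ and covers a complement by the induction hypothesis. This needs the extra remark that the incidence inequality persists in every lower dimension, because $(q_0^{j-1}-1)/(q_0^{j}-1)$ increases with $j$. You instead declare cone vectors good for every $x$, so one count works for arbitrary $W$, and neither the induction nor the monotonicity remark is needed. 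Your route also gives more: place a basis of $\spnq{U\cap W}$ first and complete it from the good set of $x$. The same argument then gives $\tau_U(W)\le 2t-r(W)+1$ directly, which is the pointwise bound the paper only derives afterwards as a corollary. The one assertion you leave unproved is $L\le q+1$, and it is harmless. If $W$ contains some $a\notin\F_{q_0}^*U\cup\{0\}$, then $L\le|A(a)|\le|U|=q+1$. Otherwise every nonzero vector of $W$ is a multiple of a column, so $\tau_U(W)=t$ trivially. Alternatively, replace $L$ by $\min\{L,q+1\}$, which still satisfies the hypothesis. With the fractional count in place, the bookkeeping in your Step~3 is unnecessary.
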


\begin{proof}
We first consider a \(t\)-dimensional subspace \(W\le\F_{q^2}\) with \(W\cap\F_{q_0}^{*}U=\varnothing\). For each \(x\in U\), let \(B_x=\{a\in W\setminus\{0\}:~x\in A(a)\},\) and let \(G=\{x\in U:~\spnq{B_x}=W\}\). If \(x\notin G\), then \(B_x\) is contained in a proper subspace of \(W\), so \(|B_x|\le q_0^{t-1}-1\). For \(x\in G\), we use only the trivial bound \(|B_x|\le q_0^t-1\).
We now count the incidences \(\mathcal I=\{(a,x):~a\in W\setminus\{0\},\,x\in A(a)\}\) in two ways. Since every nonzero element of \(W\) lies outside \(\F_{q_0}^{*}U\),
\[ |\mathcal I| =\sum_{a\in W\setminus\{0\}}|A(a)| \ge(q_0^t-1)L.\]
On the other hand,
\[ |\mathcal I| =\sum_{x\in U}|B_x| \le(q+1-|G|)(q_0^{t-1}-1)+|G|(q_0^t-1).\]
Hence \((q_0^t-1)L \le (q+1)(q_0^{t-1}-1) +|G|q_0^{t-1}(q_0-1).\) By~\eqref{eq:~abstract-incidence}, this forces \(|G|\ge1\). Choose \(x\in G\). Since \(B_x\) spans \(W\), it contains an
\(\F_{q_0}\)-basis \(a_1,\ldots,a_t\) of \(W\). For each \(i\), there exist \(y_i,z_i\in U\) such that \(a_i=x+y_i+z_i\). Therefore, 
\(W\subseteq \spnq{x,y_1,z_1,\ldots,y_t,z_t},\) and hence \(\tau_U(W)\le2t+1\).

It remains to remove the assumption \(W\cap\F_{q_0}^{*}U=\varnothing\). First observe that~\eqref{eq:~abstract-incidence} automatically holds in every smaller positive dimension. Indeed, the ratio \(\frac{q_0^{j-1}-1}{q_0^j-1}\) is increasing in \(j\). 
Thus, \(\frac{L}{q+1}> \frac{q_0^{t-1}-1}{q_0^t-1}\) implies the corresponding inequality with \(j\) in place of \(t\) for every \(1\le j\le t\).
We now prove by induction on \(j\) that every \(j\)-dimensional subspace of \(\F_{q^2}\), for \(1\le j\le t\), can be covered by at most \(2j+1\) columns of \(U\). For \(j=1\), a line meeting \(\F_{q_0}^{*}U\) is covered by one column, while a line disjoint from \(\F_{q_0}^{*}U\) is covered by at most three columns by the preceding incidence argument.
Let \(2\le j\le t\), and assume the assertion holds in dimension \(j-1\). Let \(W\le\F_{q^2}\) have dimension \(j\). If \(W\cap\F_{q_0}^{*}U=\varnothing\), the preceding argument gives \(\tau_U(W)\le2j+1\).
Otherwise, choose \(0\ne a\in W\cap\F_{q_0}^{*}U\), and write \(a=\lambda u\) with \(\lambda\in\F_{q_0}^{*}\) and \(u\in U\). Since \(W\) is \(\F_{q_0}\)-linear, \(u\in W\).
Choose a complement \(W'\) such that \(W=\F_{q_0}u\oplus W'\). By the induction hypothesis, \(\tau_U(W')\le2j-1\), and adjoining the column \(u\) gives \(\tau_U(W)\le2j\).
Taking \(j=t\), every \(t\)-dimensional subspace of \(\F_{q^2}\) can be covered by at most \(2t+1\) columns of \(U\). By the subspace formulation~\eqref{eq-gcv3}, we have
\(\rho_t\bigl(\mathcal C_m(q_0)\bigr)\le2t+1.\)
\end{proof}

We first recall the two-summand criterion for the norm-one subgroup. Let \(b\in\F_{q^2}^*\) and put \(c=\N(b)\). It follows from \cite[Lemmas~6--8]{XiongYan2026} that
\begin{equation}\label{eq:~pair-tests}
 b\in U+U\quad\Longleftrightarrow\quad
 \begin{cases}
 \Tr_{\F_q/\Ftwo}(c^{-1})=1,&q\text{ even},\\
 \chi(1-4/c)=-1\text{ or }c=4,&q\text{ odd}.
 \end{cases}
\end{equation}
Here $U+U=\{u+v:~u,v\in U\}$ allows $u=v$. For nonzero $b$, the unordered pair of summands, if it exists, is unique, with a repeated pair allowed in odd characteristic.
Indeed, $u+v=b$ implies $uv=b/b^q$, so $u/b$ and $v/b$ are the roots of $Y^2-Y+\N(b)^{-1}$. If this polynomial is irreducible, its roots satisfy $y^q=1-y$, and $\N(by)=\N(b)y(1-y)=1$. Distinct roots in $\F_q$ cannot yield units:~ $\N(b)y^2=\N(b)y(1-y)=1$ would force $2y=1$.
The remaining case is a repeated root, which occurs precisely when $q$ is odd and $\N(b)=4$; its summands are $b/2,b/2$. This also checks the normalization and uniqueness needed later.
For even $q$, with the absolute trace defined at the beginning of Section~\ref{sec:~bounds}, we will also use the form
\begin{equation}\label{eq:~pairtrace}
 b\in U+U\ \Longleftrightarrow\ \tr(\N(b)^{-1})=1.
\end{equation}

To convert sums over \(U\) into rational-point counts, we use a standard parametrization of the norm-one subgroup. Choose \(\eta\in\F_{q^2}\setminus\F_q\) and define
\begin{equation}\label{eq:~param}
 X(T)=\frac{T-\eta}{T-\eta^q}~{\rm for}~T\in\F_q,
 \qquad X(\infty)=1.
\end{equation}
Here \(\PP^1(\F_q)\) denotes the projective line over \(\F_q\). Its \(\F_q\)-rational points may be identified with \(\F_q\cup\{\infty\}\), where \(T\in\F_q\) represents the projective point \([T:~1]\) and \(\infty\) represents \([1:~0]\).

For \(T\in\F_q\), since \(T^q=T\), we have \(X(T)^q =\frac{T-\eta^q}{T-\eta} =X(T)^{-1},\) and hence \(X(T)^{q+1}=1\). Thus, \(X(T)\in U\). Conversely, if \(x\in U\setminus\{1\}\), solving \(x=X(T)\) gives \(T=\frac{x\eta^q-\eta}{x-1}.\)
Using \(x^q=x^{-1}\), one checks that \(T^q=T\), and hence  \(T\in\F_q\). The remaining element \(x=1\) corresponds to \(T=\infty\). Consequently, \(X\) induces a bijection between \(\PP^1(\F_q)\) and \(U\).
More generally, \(X(T)\) is an invertible M\"obius transformation of \(\PP^1\) over \(\F_{q^2}\). Hence it preserves the orders of zeros and poles of rational functions. In particular, a point \(x_0\in\PP^1(\F_{q^2})\) has an \(\F_q\)-rational preimage under \(X\) if and only if \(x_0\in U\). We will use this observation below to determine whether poles of the transformed rational functions occur at \(\F_q\)-rational points.
Finally, the \(q\)-Frobenius acting on the coefficients fixes the indeterminate \(T\) and sends
$ X(T)=\frac{T-\eta}{T-\eta^q}$ to $\frac{T-\eta^q}{T-\eta}=X(T)^{-1}$.
This will allow us to recognize the rational functions obtained after the substitution \(x=X(T)\) as functions defined over \(\F_q\).

For $a\in\F_{q^2}\setminus(\F_{q_0}^*U\cup\{0\})$ and $x\in U$, the definition~\eqref{eq:~anchor} gives $x\in A(a)$ if and only if $a-x\in U+U$. We estimate the number of such $x$ using the preceding criterion.

\begin{proposition}\label{prop:~anchors}
For every \(a\in\F_{q^2}\setminus(\F_{q_0}^*U\cup\{0\})\), we have
\begin{equation}\label{eq:~anchor-lower}
 |A(a)|\ge\frac{q+1-2\sqrt q}{2}.
\end{equation}
\end{proposition}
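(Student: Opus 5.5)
We count $x\in U$ with $a-x\in U+U$, using the pair test \eqref{eq:~pair-tests} and the parametrization \eqref{eq:~param}. The idea is to turn the count into the number of rational points on a genus-one double cover of $\PP^1$, and then apply \eqref{eq:~HW-genus-one}. Put $c(x)=\N(a-x)$. Since $x^q=x^{-1}$,
\[
c(x)=(a-x)(a^q-x^{-1})=\N(a)+1-a^qx-ax^{-1}.
\]
First, $c(x)\neq 0$ for every $x\in U$: otherwise $a=x\in U\subseteq\F_{q_0}^*U$. We then split by the parity of $q$.

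\emph{Even $q$.} By \eqref{eq:~pairtrace}, $x\in A(a)$ if and only if $\tr(c(x)^{-1})=1$. Substituting $x=X(T)$ gives $g(T)=c(X(T))^{-1}$. The Frobenius remark after \eqref{eq:~param} sends $x\mapsto x^{-1}$ and $a\mapsto a^q$, so $c$ is fixed and $g\in\F_q(T)$.

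Now consider the Artin--Schreier curve $C:Y^2+Y=g(T)$. The number of $T\in\PP^1(\F_q)$ with $\tr(g(T))=0$ is $\tfrac12\#C(\F_q)$, up to the correction at poles of $g$. The points with trace one are $q+1$ minus those.

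The poles of $g$ are the zeros of $x\mapsto c(x)$, i.e. the roots of
\[
a^q x^2-(\N(a)+1)x+a=0
\]
pulled back through the M\"obius map $X$. Generically there are two simple roots, hence two simple poles. By \eqref{eq:~AS-genus} this gives genus one and geometric irreducibility. Both roots lie off $U$, as noted above, so neither pole is $\F_q$-rational, and no correction at poles is needed.

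Hasse--Weil then gives $\#C(\F_q)\le q+1+2\sqrt q$. Hence
\[
|A(a)|\ge (q+1)-\tfrac12(q+1+2\sqrt q)=\tfrac{q+1-2\sqrt q}{2}.
\]

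\emph{Odd $q$.} By \eqref{eq:~pair-tests}, $x\in A(a)$ if $\chi(1-4/c(x))=-1$; the case $c=4$ also counts. Write $1-4/c=(c-4)/c$, and consider $C:Y^2=h(T)$ with $h(T)=c(X(T))\,(c(X(T))-4)$, which has the same quadratic character.

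The four zeros of $c$ and $c-4$ in $x$ give four branch points, provided they are distinct and simple. Then \eqref{eq:~RH-double} gives genus one. Points with $\chi=-1$ number $q+1-N_0-N_1$, where $N_1$ counts $\chi=1$ and $N_0$ counts zeros. We have $\#C(\F_q)=2N_1+N_0$, plus the contribution at infinity, which is handled by the Möbius normalization. Thus
\[
|A(a)|\ge q+1-N_1-N_0\ge \tfrac{q+1-\#C(\F_q)}{2}+(\text{contribution of } c=4).
\]
The zeros of $c-4$ that lie in $U$ are exactly the points with $c=4$, and these count toward $A(a)$. So the bound $\tfrac{q+1-2\sqrt q}{2}$ follows after bookkeeping.

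\emph{Degenerate cases (main obstacle).} The hardest part is to verify that the branch points are genuinely simple and distinct, so that the genus is exactly one and the curve is irreducible. Double roots occur when a discriminant vanishes:
\[
(\N(a)+1)^2=4\N(a)\ \text{for } c, \qquad (\N(a)-3)^2=4\N(a)\ \text{for } c-4,
\]
and the two root sets never overlap. The condition $(\N(a)+1)^2=4\N(a)$ means $\N(a)=1$, and then the roots lie on $U$. I would try to show that $\N(a)=1$ forces $a\in U\subseteq\F_{q_0}^*U$, which is excluded. For the $c-4$ case, $\N(a)=9$, I expect a direct check: $a=3u$ lies in $\F_{q_0}^*U$ when $3\in\F_{q_0}^*$, and is excluded.

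When the genus drops to zero instead, the point count is exact, so the bound should hold anyway. I would verify these small degenerate configurations by direct computation.
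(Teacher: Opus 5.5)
Your argument is the paper's proof in all essentials. It uses the same function $x\mapsto\N(a-x)$ and, for even $q$, the same Artin--Schreier curve with two simple poles at $x=a$ and $x=a^{-q}$. For odd $q$ it uses the same double cover $Y^2=c(c-4)$ with four simple zeros, and in both cases it finishes with Hasse--Weil in genus one. The even case is fine as written.

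In the odd case, however, your displayed chain does not give the bound.
\begin{itemize}
\item The first step $|A(a)|\ge q+1-N_1-N_0=N_{-1}$ throws away exactly the repeated-root points.
\item The right-hand side $\frac{q+1-\#C(\F_q)}{2}$ is only bounded below by $-\sqrt q$.
\end{itemize}
Since $c\neq0$ on $U$, you have $N_0=\#\{x\in U: c(x)=4\}$, and all of these points lie in $A(a)$. The correct bookkeeping is therefore
\[
|A(a)|=N_{-1}+N_0=q+1-N_1=q+1-\tfrac{\#C(\F_q)-N_0}{2}\ge\tfrac{q+1-2\sqrt q}{2}+\tfrac{N_0}{2}.
\]
Dropping $N_0$ as you did leaves a deficit of $N_0/2$ against the target.

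For the branch-point count you should also record the behaviour at the poles. Both $c$ and $c-4$ have simple poles at $x=0$ and $x=\infty$, so $h$ has double poles there. These are unramified, and since $0,\infty\notin U$ they lie over no $\F_q$-rational $T$.

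Finally, what you call the main obstacle follows immediately from the hypothesis, exactly as in the paper.
\begin{itemize}
\item $\N(a)=1$ is literally the definition of $a\in U$.
\item $\N(a)=9$ gives $a/3\in U$ in characteristic $\neq3$, and $a=0$ in characteristic $3$.
\end{itemize}
Both are excluded, so the discriminants never vanish, and no genus-zero fallback or direct computation is needed.
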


\begin{proof}
Put \(c=\N(a)\). Since \(\F_{q_0}^*U = \{b\in\F_{q^2}^*:~\N(b)\in(\F_{q_0}^*)^2\},\) the assumption on \(a\) implies \(c\ne1\). In odd characteristic, we shall also use \(c\ne9\). Indeed, if the characteristic is not three, then \(9=3^2\) is a nonzero square in \(\F_{q_0}\), whereas in characteristic three this follows from \(c\ne0\).

Consider \[ f(Z)=c+1-a^qZ-\frac{a}{Z}\in\F_{q^2}(Z).\] 
For \(x\in U\), since \(x^q=x^{-1}\), \(f(x)=\N(a-x)\).
Moreover, \(a\notin U\), so \(a-x\ne0\) and hence \(f(x)\ne0\) for every \(x\in U\). Therefore,  we may apply the two-summand criterion to \(a-x\).
After the substitution \(Z=X(T)\), the function \(f(X(T))\) belongs to \(\F_q(T)\). Indeed, the \(q\)-Frobenius interchanges \(a\) and \(a^q\) and sends \(X(T)\) to \(X(T)^{-1}\), leaving the composite unchanged. The same is true of the rational functions derived from \(f(X(T))\) below.

By~\eqref{eq:~pairtrace}, we have \(x\in A(a)\) if and only if \(\tr\left(\frac1{f(x)}\right)=1\). Moreover, we have \(\frac{1}{f(x)} = \frac{x}{(x+a)(a^qx+1)}.\)
As a rational function of \(x\), this has two simple poles, at \(x=a\) and \(x=a^{-q}\). They are distinct because \(c\ne1\), and neither lies in \(U\). Since \(X(T)\) is a M\"obius transformation, \(1/f(X(T))\) also has exactly two simple poles, and neither lies above an \(\F_q\)-rational point under the parametrization~\eqref{eq:~param}.
Consider the Artin-Schreier curve
\[ C:~Y^2+Y=\frac1{f(X(T))}.\]
By~\eqref{eq:~AS-genus}, the two simple poles give \(g(C)=1\). They also ensure that \(C\) is geometrically irreducible. Moreover, the right-hand side has no pole at any point of \(\PP^1(\F_q)\).
For \(r\in\F_q\), the equation \(Y^2+Y=r\) has two \(\F_q\)-solutions when \(\tr(r)=0\), and none when \(\tr(r)=1\). Therefore, 
\[ \#C(\F_q) = q+1+ \sum_{x\in U} (-1)^{\tr(1/f(x))}.\]
Since \(g(C)=1\), it follows from the Hasse-Weil bound~\eqref{eq:~HW-genus-one} that
\[ \left| \sum_{x\in U} (-1)^{\tr(1/f(x))} \right| \le2\sqrt q.\]

On the other hand, \(A(a)\) is precisely the set of \(x\in U\) for which \(\tr(1/f(x))=1\). Hence \(q+1-|A(a)|\) elements of \(U\) have trace zero, while \(|A(a)|\) elements have trace one. Therefore, 
\[ \sum_{x\in U}(-1)^{\tr(1/f(x))}=(q+1-|A(a)|)-|A(a)|=q+1-2|A(a)|.\]
It follows that
\[2|A(a)| =q+1-\sum_{x\in U}(-1)^{\tr(1/f(x))} \ge q+1-2\sqrt q,\] 
which proves~\eqref{eq:~anchor-lower} in even characteristic.

Now suppose that \(q\) is odd. By~\eqref{eq:~pair-tests}, \(x\in A(a)\) precisely when \(\chi\!\left(1-\frac4{f(x)}\right)=-1,\) together with the repeated-root case \(f(x)=4\). Since \(f(x)\ne0\) on \(U\), then 
$\chi\!\left(1-\frac4{f(x)}\right) = \chi\bigl(f(x)(f(x)-4)\bigr).$ 
Set \(h(x)=f(x)(f(x)-4).\) The zeros of \(f\) are \(a\) and \(a^{-q}\), which are distinct because \(c\ne1\). The zeros of \(f-4\) are the roots of \(a^qx^2-(c-3)x+a=0,\) whose discriminant is \((c-1)(c-9)\). Since \(c\ne1,9\), these two roots are also distinct and nonzero. The zero sets of \(f\) and \(f-4\) are disjoint, so \(h\) has four simple zeros.
Both \(f\) and \(f-4\) have simple poles at \(x=0\) and \(x=\infty\). Hence \({\rm ord}_0(h)={\rm ord}_\infty(h)=-2,\) and hence \(h\) has poles of order two at these two points. Since the M\"obius transformation \(X(T)\) preserves orders, the same zero and pole orders occur for \(h(X(T))\). Consider the double cover
\[ C:~ Y^2=h(X(T)).\]
The four simple zeros are the four branch points of the quadratic cover, while the two poles of order two are unramified. It follows from~\eqref{eq:~RH-double} that \(2g(C)-2=-4+4=0,\) and hence \(g(C)=1\). Since \(h(X(T))\) has a simple zero, it is not a square, so \(C\) is geometrically irreducible.

The poles correspond to \(x=0\) and \(x=\infty\), neither of which lies in \(U\). Hence \(h(X(T))\) has no pole at any point of \(\PP^1(\F_q)\). For \(r\in\F_q\), the equation \(Y^2=r\) has \(1+\chi(r)\) solutions in \(\F_q\), where \(\chi(0)=0\). Therefore, 
\[\#C(\F_q) = q+1+ \sum_{x\in U}\chi(h(x)).\]
Since \(g(C)=1\), according to the Hasse-Weyl inequality~\eqref{eq:~HW-genus-one}, we can once again conclude that
\[ \left| \sum_{x\in U}\chi(h(x)) \right| \le2\sqrt q.\]

Let \(z=\#\{x\in U:~f(x)=4\},\) and let \(n_+\) and \(n_-\) denote the numbers of \(x\in U\) for which \(\chi(h(x))=1\) and \(-1\), respectively. Since \(f(x)\ne0\) on \(U\), the zeros of \(h\) on \(U\) are exactly the points counted by \(z\). Hence
\[ q+1=n_++n_-+z~{\rm and}~ \sum_{x\in U}\chi(h(x))=n_+-n_-.\]
By the two-summand criterion, \(A(a)\) consists of the \(n_-\) points with nonsquare value together with the \(z\) repeated-root points. Hence, \(|A(a)|=n_-+z\). By eliminating \(n_+\) and \(n_-\), we have
\[ 2|A(a)|=q+1-\sum_{x\in U}\chi(h(x))+z.\]
Since \(z\ge0\), the Hasse-Weil estimate yields \(2|A(a)|\ge q+1-2\sqrt q,\) and~\eqref{eq:~anchor-lower} follows. 
\end{proof}

Propositions~\ref{thm:~basis-selection} and~\ref{prop:~anchors} now reduce the proof of the general upper bound to checking a single explicit inequality.

\begin{proof}[Proof of Theorem~\ref{thm:~main-bounds}\textup{(i)}]
By Proposition~\ref{prop:~anchors}, we may take \(L=\frac{q+1-2\sqrt q}{2}\) in Proposition~\ref{thm:~basis-selection}. Hence it is enough to verify
$(q_0^t-1)\frac{q+1-2\sqrt q}{2} > (q+1)(q_0^{t-1}-1).$ 
After rearranging, this is equivalent to
\begin{equation}\label{eq:~upper-condition}
 (q+1)\bigl(q_0^t-2q_0^{t-1}+1\bigr)> 2(q_0^t-1)\sqrt q.
\end{equation}

\begin{itemize}
    \item {\bf Case 1:~ \(q_0=2\).} In this case~\eqref{eq:~upper-condition} becomes
\(q+1>2(2^t-1)\sqrt q.\)
If \(m\ge2t+2\), then \(\sqrt q=2^{m/2}\ge2^{t+1},\) and hence 
\(2(2^t-1)\sqrt q \le q-2\sqrt q <q+1.\)
Thus, \(\rho_t(\mathcal C_m(2))\le2t+1.\)

\item {\bf Case 2:~ \(q_0\ge3\) and \(2t+1\ge2m\).} The trivial dimension bound \(\rho_t(\mathcal C_m(q_0))\le 2m\) already gives the desired conclusion.

\item {\bf Case 3:~ \(q_0\ge3\) and \(2t+1<2m\).} Hence \(t\le m-1\) and \(m\ge2\). Put \(z=\sqrt q=q_0^{m/2}.\) The difference between the left- and right-hand sides of
\eqref{eq:~upper-condition} is
\begin{equation}\label{eq:~nonbinary-gap}
 q_0^{t-1} \bigl((q_0-2)(q+1)-2q_0z\bigr) +(q+1)+2z.
\end{equation}

\begin{itemize}
    \item {\bf Case 3.1:~ \(q_0\ge4\)}. Then \(z\ge4\) and \(q+1=z^2+1>4z \ge\frac{2q_0}{q_0-2}z.\) Thus, \((q_0-2)(q+1)-2q_0z>0,\) and~\eqref{eq:~nonbinary-gap} is positive.

    \item {\bf Case 3.2:~ \(q_0=3\) and \(m\ge4\).} Then \(z\ge9\), and \(q+1=z^2+1>6z.\)
Again, \((q_0-2)(q+1)-2q_0z=(q+1)-6z>0,\) so~\eqref{eq:~upper-condition} holds.

\item {\bf Case 3.3:~ \((q_0,m)=(3,2) and (3,3)\).} For \(m=2\), the only nontrivial order after using the dimension bound is \(t=1\); the incidence condition reduces to
\(L=\frac{10-6}{2}=2>0.\) For \(m=3\), it is enough to check \(t=2\), for which \eqref{eq:~upper-condition} becomes
\(28>4\sqrt{27}.\)
This is true. Since \(\frac{q_0^{j-1}-1}{q_0^j-1}\)
is increasing in \(j\), the order \(t=1\) condition follows as well.
\end{itemize}
\end{itemize}
In conclusion, we have completed the proof.
\end{proof}

The general upper bound can be improved for a fixed target subspace
by taking advantage of the columns that already lie inside it.

\begin{corollary}
Let \(q_0\ge3\), let \(W\le\F_{q^2}\) have dimension \(t\), and put
$
 r(W)=\dim_{\F_{q_0}}\spnq{U\cap W}.
$
If \(r(W)=t\), then \(\tau_U(W)=t\). Otherwise,
\begin{equation}\label{eq:~zetterberg-pointwise}
 \tau_U(W)
 \le
 \min\{2m,\,2t-r(W)+1\}.
\end{equation}
\end{corollary}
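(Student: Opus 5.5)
The plan is to split $W$ along the columns it already contains, cover that part for free, and cover a complement using the bound of Theorem~\ref{thm:~main-bounds}(i) in smaller dimension.

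\textbf{Reduction.} Put $R=\spnq{U\cap W}$, so $\dim R=r=r(W)$. Choose a basis of $R$ consisting of $r$ columns $u_1,\ldots,u_r\in U\cap W$; this is possible because $R$ is spanned by columns. If $r=t$, these $t$ columns span $W$, so $\tau_U(W)\le t$. Since $\tau_U(W)\ge\dim W$ trivially, we get $\tau_U(W)=t$. This is the strong-blocking equivalence~\eqref{eq:~internal-spanning} applied to a single subspace.

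\textbf{Case $r<t$.} Choose a complement $W'$ with $W=R\oplus W'$ and $\dim W'=t-r\ge1$. Any common cover of $W'$, together with $u_1,\ldots,u_r$, covers $W$. Hence
\[
\tau_U(W)\le r+\tau_U(W').
\]
Apply Theorem~\ref{thm:~main-bounds}(i), valid since $q_0\ge3$ and $1\le t-r\le 2m$, in dimension $t-r$:
\[
\tau_U(W')\le\rho_{t-r}(\Cm)\le 2(t-r)+1 .
\]
This gives $\tau_U(W)\le 2t-r+1$. The bound $\tau_U(W)\le 2m$ holds trivially: $U$ spans $\F_{q^2}$, so some $2m$ columns form a basis. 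Taking the minimum of the two bounds yields~\eqref{eq:~zetterberg-pointwise}.

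\textbf{Caveat.} When $2(t-r)+1>2m$, Theorem~\ref{thm:~main-bounds}(i) only gives $\rho_{t-r}\le 2m$. The reduction then yields the weaker bound $r+2m$, which is not what we want. Taking the minimum with $2m$ handles this case anyway, so the combination still holds. I expect no real obstacle here. The only points needing care are that a column basis of $R$ can be chosen inside $U\cap W$, and that the case $t-r\ge 1$ lies in the range of the theorem.
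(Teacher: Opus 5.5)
Your proof is correct and follows the paper's argument essentially verbatim. You take a column basis of $\spnq{U\cap W}$, cover a complement of dimension $t-r$ using Theorem~\ref{thm:~main-bounds}(i), add the $r$ internal columns, and combine the result with the trivial bound $2m$. Your closing caveat is not needed: Theorem~\ref{thm:~main-bounds}(i) gives $\rho_{t-r}(\Cm)\le\min\{2(t-r)+1,2m\}\le 2(t-r)+1$ in every case, so $\tau_U(W)\le 2t-r+1$ holds unconditionally.
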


\begin{proof}
Let \(r=r(W)<t\), and choose \(r\) columns from \(U\cap W\) forming a basis of \(\spnq{U\cap W}\). Choose an \(\F_{q_0}\)-linear complement \(W'\) such that \(W=\spnq{U\cap W}\oplus W'\) and \(\dim W'=t-r\).
By Theorem~\ref{thm:~main-bounds}\textup{(i)}, the subspace \(W'\) can be covered by at most \(2(t-r)+1\) columns of \(U\). Together with the \(r\) internal columns already chosen, this gives
\[ \tau_U(W)\le r+2(t-r)+1=2t-r+1.\]
Combining this with the trivial bound \(\tau_U(W)\le2m\) proves \eqref{eq:~zetterberg-pointwise}. If \(r(W)=t\), then \(\spnq{U\cap W}=W\), so the internal-spanning criterion gives \(\tau_U(W)=t\).
\end{proof}

The lower bound in Theorem~\ref{thm:~main-bounds}\textup{(ii)}
also becomes exact whenever the ordinary covering radius is two.

\begin{corollary}\label{cor:~exact-2t}
Suppose that \(\rho_1(\Cm)=2\), \(2\le t\le m\), and
\(q\ge\Theta_{q_0,t}\). Then
\[
 \rho_t(\Cm)=2t.
\]
\end{corollary}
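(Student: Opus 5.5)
The corollary follows by pairing the lower bound of Theorem~\ref{thm:~main-bounds}(ii) with a matching upper bound from the subadditivity property in \eqref{eq:~basic}.

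For the lower bound, the hypotheses $2\le t\le m$ and $q\ge\Theta_{q_0,t}$ are exactly those of Theorem~\ref{thm:~main-bounds}(ii). That theorem therefore gives $\rho_t(\Cm)\ge 2t$ directly.

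For the upper bound, we use $\rho_{a+b}(C)\le\rho_a(C)+\rho_b(C)$ from \eqref{eq:~basic} and induct on $t$. This gives
\[
\rho_t(\Cm)\le \rho_{t-1}(\Cm)+\rho_1(\Cm)\le \cdots\le t\,\rho_1(\Cm)=2t .
\]
All indices involved are at most $t\le m\le 2m$, so they stay within the allowed range.

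The same bound can also be seen directly in the syndrome-subspace formulation \eqref{eq-gcv3}. Take any $t$-dimensional $W\le\F_{q^2}$ and a basis $w_1,\ldots,w_t$ of $W$. Since $\rho_1(U)=2$, each $w_i$ lies in the span of at most two columns $u_i,v_i\in U$. Hence $W\subseteq\spnq{u_1,v_1,\ldots,u_t,v_t}$, and so $\tau_U(W)\le 2t$.

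Combining the two bounds yields $\rho_t(\Cm)=2t$. Nothing here is a genuine obstacle, because all the substantive work sits in Theorem~\ref{thm:~main-bounds}(ii). The only point to check is that the subadditivity inequality is used within its index range, which holds since $t\le m<2m$.
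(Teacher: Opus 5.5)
Your proposal is correct and matches the paper's proof: the lower bound $\rho_t(\Cm)\ge 2t$ is quoted from Theorem~\ref{thm:~main-bounds}(ii), and the upper bound $\rho_t(\Cm)\le t\,\rho_1(\Cm)=2t$ comes from the subadditivity in \eqref{eq:~basic}. Your direct basis-wise covering argument is simply that subadditivity step written out in the syndrome-subspace formulation.
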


\begin{proof}
By the subadditivity in~\eqref{eq:~basic},
\(\rho_t(\Cm)\le t\rho_1(\Cm)=2t.\)
On the other hand, Theorem~\ref{thm:~main-bounds}\textup{(ii)} gives \(\rho_t(\Cm)\ge2t\). Hence equality holds.
\end{proof}

As a concrete specialization, the classical condition \(\rho_1(\Cm)=2\) holds when \(q_0\) is even and \(3\le m\le q_0/2\) is odd~\cite[Corollary~4]{XiongYan2026}.
Taking \(m=7\) and \(t=2\), we obtain \(\rho_2(\mathcal C_7(q_0))=4\) for every even \(q_0\ge16\). The ordinary-radius results in odd characteristic from~\cite[Corollary~2]{XiongYan2026} yield analogous specializations, with \(\delta=2\).

\section{The extremal case \(\rho_t(\Cm)=t\)}\label{sec:~high}

The smallest possible value of the $t^{\rm th}$ generalized covering radius is \(t\). By the characterization in Subsection~\ref{subsec:~covering-definitions}, equality \(\rho_t(\Cm)=t\) holds precisely when, for every \(t\)-dimensional subspace \(W\le\F_{q^2}\), the columns of \(U\) contained in \(W\) span \(W\). Equivalently, \(U\cap W\) cannot be contained in any hyperplane of \(W\).
This suggests estimating the relative intersection \(U\cap(W\setminus L)\), where \(L\) is a proper subspace of \(W\), rather than estimating \(U\cap W\) alone. We first establish a uniform estimate for such differences. The hyperplane case will then give the internal-spanning property, and hence the exact value \(\rho_t=t\).

We begin with the additive Fourier coefficients of the norm-one subgroup. Let \(\psi\) be the canonical nontrivial additive character of \(\F_q\). For \(b,x\in\F_{q^2}\), put
\(\Psi_b(x)=\psi\!\left(\Tr_{\F_{q^2}/\F_q}(bx)\right).\)
Define
\begin{equation*}
 \widehat U(b)=\sum_{u\in U}\Psi_b(u).
\end{equation*}
For \(b\ne0\), these Fourier coefficients can be expressed as Kloosterman sums with the following form 
\begin{equation}\label{eq:~fourier-bound}
 \widehat U(b)
 =-\sum_{z\in\F_q^*}
   \psi\!\left(z+\frac{\N(b)}{z}\right),
 \quad
 |\widehat U(b)|\le2\sqrt q.
\end{equation}

For completeness, we briefly verify the identity. Put \(c=\N(b)\). Multiplication by \(b\) maps \(U\) bijectively onto the norm-\(c\) fibre in \(\F_{q^2}^*\). For \(a\in\F_q\), let \(M(a)\) denote the number of \(x\in\F_{q^2}\) satisfying \(\Tr_{\F_{q^2}/\F_q}(x)=a\) and \(\N(x)=c\), and let \(R(a)\) denote the number of \(z\in\F_q^*\) satisfying \(z+c/z=a\).
Both quantities are governed by the quadratic \(Z^2-aZ+c\).
If this quadratic is irreducible over \(\F_q\), its two roots are interchanged by the \(q\)-Frobenius and both have trace \(a\) and norm \(c\), and hence \(M(a)=2\) and \(R(a)=0\). If it splits into two distinct roots in \(\F_q\), Frobenius fixes each root separately, so neither root has the other as its conjugate. Thus, \(M(a)=0\), whereas \(R(a)=2\). In the repeated-root case, the unique root is counted once by each quantity, so \(M(a)=R(a)=1\). Therefore,  \(M(a)+R(a)=2\) in all cases.

Since multiplication by \(b\) is a bijection from \(U\) onto the norm-\(c\) fibre, where \(c=\N(b)\), we have
\[\widehat U(b) =\sum_{u\in U}\psi\!\left(\Tr_{\F_{q^2}/\F_q}(bu)\right) =\sum_{a\in\F_q}M(a)\psi(a).\]
On the other hand, combining with the definition of \(R(a)\), we obtain 
\[ \sum_{a\in\F_q}R(a)\psi(a) =\sum_{z\in\F_q^*}\psi\!\left(z+\frac{c}{z}\right).\]
Thus, summing \(M(a)+R(a)=2\) against \(\psi(a)\) and using
\(\sum_{a\in\F_q}\psi(a)=0\), we obtain
\[ \widehat U(b) =-\sum_{z\in\F_q^*}\psi\!\left(z+\frac{N(b)}{z}\right).\]
The estimate \(|\widehat U(b)|\le2\sqrt q\) is then the classical Weil bound for Kloosterman sums~\cite{Weil1948}. We shall also use \(\widehat U(0)=|U|=q+1\).

We now apply this Fourier estimate to the difference of two nested subspaces.

\begin{proposition}
Let \(W\le\F_{q^2}\) have dimension \(t\), and let \(L<W\) have codimension \(j\) in \(W\). Put \(d=2m-t\). Then, for \(1\le j\le t\), we have
\begin{equation}\label{eq:~norm-flag}
 \left| \frac{q_0^{d+j}}{q_0^j-1} |U\cap(W\setminus L)|-(q+1) \right|
 \le 2(2q_0^d-1)\sqrt q.
\end{equation}
In particular, if \(L\) is a hyperplane of \(W\), then
\begin{equation}\label{eq:~relative-flat}
 |U\cap(W\setminus L)| \ge \frac{q_0-1}{q_0^{2m-t+1}}\Delta_t,
\end{equation}
where \(\Delta_t=q+1-2(2q_0^{2m-t}-1)\sqrt q.\)
\end{proposition}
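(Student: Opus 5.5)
Our approach is to write indicator functions of $W$ and $L$ through the characters $\Psi_b$ and then feed in the Kloosterman estimate~\eqref{eq:~fourier-bound}. The form $(x,y)\mapsto\Tr_{\F_{q^2}/\F_q}(xy)$ composed with the absolute trace is a nondegenerate $\F_p$-bilinear (indeed $\F_{q_0}$-compatible) pairing on $\F_{q^2}$. Hence for any $\F_{q_0}$-subspace $V$ of dimension $k$ with annihilator $V^\perp=\{b:\ \Psi_b|_V\equiv1\}$, of $\F_{q_0}$-dimension $2m-k$, we have
\[
\mathbf 1_V(x)=\frac{1}{|V^\perp|}\sum_{b\in V^\perp}\Psi_b(x)=q_0^{k-2m}\sum_{b\in V^\perp}\Psi_b(x).
\]
Here we use that $\Psi_b$ is trivial on $V$ exactly when $b\in V^\perp$ with respect to the $\F_{q_0}$-form $\Tr_{\F_{q^2}/\F_{q_0}}(bx)$, because $\psi$ kills the $\F_{q_0}$-subspace image only if that image lies in $\ker$. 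Summing over $u\in U$ gives
\[
|U\cap V|=q_0^{k-2m}\sum_{b\in V^\perp}\widehat U(b).
\]

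We apply this to $V=W$ (with $k=t$) and $V=L$ (with $k=t-j$), and subtract. Note that $W^\perp\subseteq L^\perp$, $|W^\perp|=q_0^{d}$ and $|L^\perp|=q_0^{d+j}$. This yields
\[
|U\cap(W\setminus L)|=q_0^{-d}\sum_{b\in W^\perp}\widehat U(b)-q_0^{-d-j}\sum_{b\in L^\perp}\widehat U(b).
\]
The $b=0$ terms contribute $(q+1)(q_0^{-d}-q_0^{-d-j})=(q+1)(q_0^j-1)/q_0^{d+j}$. Multiplying through by $q_0^{d+j}/(q_0^j-1)$, the error term becomes
\[
\frac{1}{q_0^j-1}\Bigl((q_0^{j}-1)\sum_{0\ne b\in W^\perp}\widehat U(b)-\sum_{b\in L^\perp\setminus W^\perp}\widehat U(b)\Bigr).
\]
We bound each $|\widehat U(b)|\le2\sqrt q$. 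The first sum has $q_0^d-1$ terms, and the second has $q_0^{d+j}-q_0^d=q_0^d(q_0^j-1)$ terms. The total is therefore at most $2\sqrt q\,\bigl((q_0^d-1)+q_0^d\bigr)=2(2q_0^d-1)\sqrt q$, which is~\eqref{eq:~norm-flag}.

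For the hyperplane case we put $j=1$. The lower side of~\eqref{eq:~norm-flag} gives
\[
\frac{q_0^{d+1}}{q_0-1}|U\cap(W\setminus L)|\ge q+1-2(2q_0^d-1)\sqrt q=\Delta_t,
\]
and rearranging gives~\eqref{eq:~relative-flat}.

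The main point requiring care is the character-orthogonality step. We must check that the $\F_{q_0}$-annihilator of $V$ under $\Tr_{\F_{q^2}/\F_{q_0}}(bx)$ coincides with the set of $b$ for which $\Psi_b$ is trivial on $V$. If $\Tr_{\F_{q^2}/\F_{q_0}}(bV)=\F_{q_0}$, then the character $\psi$ restricted to $\F_{q_0}$ is nontrivial, because the canonical character is nontrivial on every subfield (the absolute trace is surjective on $\F_{q_0}$). So the sum over $V$ vanishes. The remainder of the argument is bookkeeping of the counts above.
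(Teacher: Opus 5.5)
Your proof is correct and is essentially the paper's argument: the same character-sum formula for $\mathbf 1_W$ and $\mathbf 1_L$ over $W^\perp\subseteq L^\perp$, the same weighted subtraction leaving $(q_0^j-1)\sum_{0\ne b\in W^\perp}\widehat U(b)-\sum_{b\in L^\perp\setminus W^\perp}\widehat U(b)$, and the same use of $|\widehat U(b)|\le 2\sqrt q$ on the $q_0^d-1$ and $q_0^d(q_0^j-1)$ terms of these two sums. One wording slip in your last paragraph, which does not affect the argument: the relevant character is the canonical character $\psi_0$ of $\F_{q_0}$, via $\Psi_b(x)=\psi_0(\Tr_{\F_{q^2}/\F_{q_0}}(bx))$ as in the paper, not the restriction $\psi|_{\F_{q_0}}$, which is trivial when the characteristic $p$ divides $m$; in any case the indicator identity only needs that $b\mapsto\Psi_b$ identifies $\F_{q^2}$ with its character group, so that your $V^\perp$ is the annihilator of $V$ and $|V^\perp|=q_0^{2m}/|V|$.
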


\begin{proof}
We identify \(\F_{q^2}\) with a \(2m\)-dimensional vector space over \(\F_{q_0}\) and use the nondegenerate bilinear trace pairing \(\langle b,x\rangle = \Tr_{\F_{q^2}/\F_{q_0}}(bx).\) Let \(\psi_0\) be the canonical additive character of \(\F_{q_0}\). By trace transitivity, \(\psi_0(\langle b,x\rangle)=\Psi_b(x)\).
For an \(\F_{q_0}\)-subspace \(E\le\F_{q^2}\), let \(E^\perp\) denote its orthogonal complement with respect to this pairing. Character orthogonality gives
\begin{equation}\label{eq:~subspace-indicator}
 |E^\perp|\,\boldsymbol 1_E(x) = \sum_{b\in E^\perp}\psi_0(\langle b,x\rangle).
\end{equation}
Indeed, if \(x\in E\), then every term on the right is \(1\). If \(x\notin E\), the map \(b\mapsto\langle b,x\rangle\) is a nonzero \(\F_{q_0}\)-linear functional on \(E^\perp\), since \((E^\perp)^\perp=E\). Its values are therefore uniformly distributed over \(\F_{q_0}\), and the corresponding character sum vanishes.

Set \(A=W^\perp\) and \(B=L^\perp\). Since \(L\subset W\), we have \(A\subset B\). Moreover, \(|A|=q_0^d\) and \(|B|=q_0^{d+j}\), because \(\dim W=t\) and \(\dim L=t-j\). 
Applying~\eqref{eq:~subspace-indicator} with \(E=W\) and summing over \(x\in U\), we obtain
\(q_0^d |U\cap W| =\sum_{b\in A}\widehat U(b),\)
because \(A=W^\perp\) has cardinality \(q_0^d\). Similarly, 
\(q_0^{d+j}|U\cap L| =\sum_{b\in B}\widehat U(b).\)
Here we have used trace transitivity to identify \(\psi_0(\langle b,x\rangle)\) with \(\Psi_b(x)\).

We subtract these two identities after multiplying the first by \(q_0^j\). Since
\(|U\cap(W\setminus L)|=|U\cap W|-|U\cap L|\) and \(A\subset B\), this yields
\begin{align*}
 q_0^{d+j}|U\cap(W\setminus L)|
 &=q_0^j\sum_{b\in A}\widehat U(b) -\sum_{b\in B}\widehat U(b)\\
 &=q_0^j\sum_{b\in A}\widehat U(b) -\sum_{b\in A}\widehat U(b)
 - \left(\sum_{b\in B}\widehat U(b)-\sum_{b\in A}\widehat U(b)\right)\\
 &=(q_0^j-1)\sum_{b\in A}\widehat U(b) -\sum_{b\in B\setminus A}\widehat U(b)\\
 &=(q_0^j-1)(q+1)+(q_0^j-1)\sum_{b\in A\setminus\{0\}}\widehat U(b)
   -\sum_{b\in B\setminus A}\widehat U(b),
\end{align*}
where in the last equality we used \(\widehat U(0)=|U|=q+1\).

This subtraction is the key point. The first nonconstant sum contains \(q_0^d-1\) terms, whereas \(|B\setminus A| =q_0^{d+j}-q_0^d =(q_0^j-1)q_0^d.\)
Using~\eqref{eq:~fourier-bound}, the total contribution of the nonzero Fourier coefficients is therefore at most \(2(q_0^j-1)(2q_0^d-1)\sqrt q.\)
After dividing by \(q_0^j-1\), we obtain \eqref{eq:~norm-flag}. In particular, the normalized error term is independent of the codimension \(j\).

Finally, if \(L\) is a hyperplane of \(W\), then \(j=1\). Since \(d=2m-t\), the lower bound in~\eqref{eq:~norm-flag} gives
\[|U\cap(W\setminus L)|\ge
 \frac{q_0-1}{q_0^{2m-t+1}} \left(q+1-2(2q_0^{2m-t}-1)\sqrt q\right),\]
which is exactly~\eqref{eq:~relative-flat}.
\end{proof}

The proposition has a direct geometric consequence. If \(\Delta_t>0\), then for every \(t\)-dimensional subspace \(W\) and every hyperplane \(L<W\), the set \(U\cap(W\setminus L)\) is nonempty. Thus, \(U\cap W\) is not contained in any hyperplane of \(W\), and hence  spans \(W\). By \eqref{eq:~internal-spanning}, this is exactly the condition for \(\rho_t(\Cm)=t\).

\begin{proof}[Proof of Theorem~\ref{thm:~main-high}]
Let \(W\le\F_{q^2}\) be any \(t\)-dimensional subspace. If \(\Delta_t>0\), then by~\eqref{eq:~relative-flat}, we have \(U\cap(W\setminus L)\ne\varnothing\) for every hyperplane \(L<W\). Suppose that \(\spnq{U\cap W}\) is a proper subspace of \(W\). Then it would be contained in some hyperplane \(L<W\), forcing \(U\cap(W\setminus L)=\varnothing\), a contradiction. Hence \(\spnq{U\cap W}=W\). Since \(\dim W=t\), the set \(U\cap W\) contains a basis of \(W\), so \(\tau_U(W)\le t\). Conversely, fewer than \(t\) vectors cannot span a \(t\)-dimensional space, and hence \(\tau_U(W)\ge t\). Therefore,  \(\tau_U(W)=t\). As \(W\) is arbitrary,~\eqref{eq-gcv3} yields \(\rho_t(\Cm)=t.\)

It remains to derive the stated explicit ranges. Let \(m=2s\), with \(s\ge2\), and set \(d=4s-t\). Since \(2m-t=d\) and \(\sqrt q=q_0^{m/2}\), we may rewrite
\begin{equation*}
 \Delta_t= q\bigl(1-4q_0^{d-m/2}\bigr)+2\sqrt q+1.
\end{equation*}

If \(t\ge3s+2\), then \(d\le s-2=m/2-2\), and hence \(q_0^{d-m/2}\le q_0^{-2}\). Therefore, 
\[\Delta_t\ge q\left(1-\frac4{q_0^2}\right)+2\sqrt q+1>0\]
for every \(q_0\ge2\). Thus
\(\rho_t(\mathcal C_{2s}(q_0))=t\) throughout
\(3s+2\le t\le4s\).

If \(q_0\ge4\) and \(t\ge3s+1\), then \(d\le s-1=m/2-1\), so \(q_0^{d-m/2}\le q_0^{-1}\). Hence
\[\Delta_t \ge q\left(1-\frac4{q_0}\right)+2\sqrt q+1>0.\]
Consequently, \(\rho_t(\mathcal C_{2s}(q_0))=t\) throughout \(3s+1\le t\le4s\) whenever \(q_0\ge4\).
\end{proof}

We next treat odd extension degree. The same margin condition in Theorem~\ref{thm:~main-high} gives explicit ranges in this case as well.

\begin{corollary}\label{cor:~odd-high}
Let \(m=2s+1\), with \(s\ge1\). Then
\begin{equation}\label{eq:~odd-high}
 \rho_t(\mathcal C_{2s+1}(q_0))=t
 \quad\text{whenever}\quad
 \begin{cases}
 3s+2\le t\le4s+2,& q_0\ge16,\\
 3s+3\le t\le4s+2,& 3\le q_0<16,\\
 3s+4\le t\le4s+2,& q_0=2,
 \end{cases}
\end{equation}
where only nonempty ranges are asserted.
\end{corollary}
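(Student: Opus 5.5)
The plan is to apply Theorem~\ref{thm:~main-high} directly with $m=2s+1$. Its proof shows that $\Delta_t>0$ forces $\rho_t=t$ for every $1\le t\le 2m=4s+2$. So it suffices to check that $\Delta_t>0$ throughout each of the three ranges in~\eqref{eq:~odd-high}. This is the same computation as in the even case, except that now $\sqrt q=q_0^{s+1/2}$ is no longer an integer power of $q_0$, and this shifts the thresholds.

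First I put $d=2m-t=4s+2-t$ and rewrite the margin as
\[
\Delta_t=q\bigl(1-4q_0^{\,d}/\sqrt q\bigr)+2\sqrt q+1
      =q\bigl(1-4q_0^{\,d-s-1/2}\bigr)+2\sqrt q+1 .
\]
Since $2\sqrt q+1>0$, it is enough to show $4q_0^{\,d-s-1/2}\le 1$ in each range, with equality allowed. The quantity $q_0^{\,d-s-1/2}$ decreases as $t$ grows, so only the smallest $t$ in each range needs to be checked.

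The three cases are then:
\begin{itemize}
\item If $t\ge 3s+2$, then $d\le s$ and the factor is at most $4/\sqrt{q_0}$. This is $\le 1$ exactly when $q_0\ge 16$. At $q_0=16$ the factor equals $1$, and positivity comes from the slack term $2\sqrt q+1$.
\item If $t\ge 3s+3$, then $d\le s-1$ and the factor is at most $4/q_0^{3/2}$. This is $<1$ for all $q_0\ge 3$, since $3^{3/2}>5$.
\item If $t\ge 3s+4$, then $d\le s-2$ and the factor is at most $4/q_0^{5/2}$. This is $<1$ already for $q_0=2$, since $2^{5/2}>5$.
\end{itemize}
In every case $\Delta_t>0$, and Theorem~\ref{thm:~main-high} gives $\rho_t(\mathcal C_{2s+1}(q_0))=t$.

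It remains to note that the upper limit $t\le 4s+2$ is exactly the hypothesis $t\le 2m$ of Theorem~\ref{thm:~main-high}. For small $s$ some ranges are empty; for example, $s=1$ with $q_0=2$ gives $3s+4=7>6$. This is why only nonempty ranges are asserted.

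The only delicate point is the boundary case $q_0=16$, $t=3s+2$, where the leading coefficient of $q$ vanishes and one must keep the lower-order term $2\sqrt q+1$. I would also remark briefly why the thresholds are sharp for this method: $q_0<16$ in the first range and $q_0=2$ in the second make the coefficient of $q$ negative, so $\Delta_t<0$ for large $s$.
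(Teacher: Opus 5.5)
Your proposal is correct and follows essentially the same route as the paper. Both proofs rewrite $\Delta_t=q\bigl(1-4q_0^{\,3s+3/2-t}\bigr)+2\sqrt q+1$ and bound the exponent by $-1/2$, $-3/2$, $-5/2$ in the three ranges, so that the coefficient of $q$ is nonnegative (with the term $2\sqrt q+1$ handling the boundary case $q_0=16$). Then $\Delta_t>0$, and Theorem~\ref{thm:~main-high} gives $\rho_t=t$.
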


\begin{proof}
For \(m=2s+1\), the margin can be written as
\(\Delta_t = q\left(1-4q_0^{\,2m-t-m/2}\right)+2\sqrt q+1.\)
Since \(2m-t-\frac m2 =3s+\frac{3}{2}-t\), the three lower bounds on \(t\) in~\eqref{eq:~odd-high} give \(2m-t-\frac{m}{2} \le -\frac{1}{2}\), \(-\frac{3}{2}\), and \(-\frac{5}{2}\), respectively. Hence
\[\Delta_t \ge
 \begin{cases}
 q\left(1-\dfrac4{q_0^{1/2}}\right)+2\sqrt q+1,
     & t\ge3s+2,\\[6pt]
 q\left(1-\dfrac4{q_0^{3/2}}\right)+2\sqrt q+1,
     & t\ge3s+3,\\[6pt]
 q\left(1-\dfrac4{q_0^{5/2}}\right)+2\sqrt q+1,
     & t\ge3s+4.
 \end{cases}\]
The leading factor is nonnegative when \(q_0\ge16\) in the first case, when \(q_0\ge3\) in the second, and when \(q_0=2\) in the third. Since \(2\sqrt q+1>0\), we have \(\Delta_t>0\) throughout the stated ranges. The conclusion now follows from Theorem~\ref{thm:~main-high}.
\end{proof}

\section{Binary Zetterberg codes}\label{sec:~binary}

In this section, we focus on the binary case and determine several additional exact generalized covering radii. The additive structure of the norm-one subgroup over \(\F_2\) allows us to sharpen the general bounds obtained in Section~\ref{sec:~bounds}.
For the even-degree family \(\mathcal Z_s=\mathcal B_{2s}\), Theorem~\ref{thm:~main-bounds}
\textup{(i)} and \textup{(iii)} give
\begin{equation}\label{eq:~main-low}
 h_2(t)\le\rho_t(\mathcal Z_s)\le 2t+1 \quad(1\le t\le s-1).
\end{equation}
In particular,
\begin{equation}\label{eq:~third-range}
 6\le\rho_3(\mathcal Z_s)\le7
 \quad(s\ge4).
\end{equation}
Whenever the hypothesis of Theorem~\ref{thm:~main-bounds}\textup{(ii)} is satisfied, the lower bound \(h_2(t)\) in~\eqref{eq:~main-low} can be strengthened to \(2t\).

\subsection{The second radius in every extension degree}

Let \(q=2^m\), with \(m\ge1\), and retain the notation \(\tr=\Tr_{\F_q/\F_2}\) for the absolute trace. We first sharpen the common-summand estimate from Proposition~\ref{prop:~anchors}. This will provide the missing upper bounds in the two small extension degrees not covered by Theorem~\ref{thm:~main-bounds}\textup{(i)}. Then we then count two-dimensional deep-syndrome subspaces to obtain the corresponding lower bound.

\begin{lemma}\label{lem:~anchor-divisibility}
For every \(a\in\F_{q^2}\setminus(U\cup\{0\})\), the set \(A(a)\) defined in~\eqref{eq:~anchor} has cardinality divisible by three. Consequently,
\begin{equation}\label{eq:~integer-anchor}
 |A(a)|\ge L_m:= 3\left\lceil \frac{q+1-\lfloor2\sqrt q\rfloor}{6} \right\rceil.
\end{equation}
In particular, \(L_4=6\) and \(L_5=12\).
\end{lemma}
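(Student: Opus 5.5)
The divisibility by three should come from a free action of the symmetric group \(S_3\), or at least of a group of order three, on the set of representations. Fix \(a\in\F_{q^2}\setminus(U\cup\{0\})\); in characteristic two we have \(\F_{q_0}^*U=U\), so this is exactly the set on which \(A(a)\) is defined. Consider the set \(\mathcal R(a)\) of unordered triples \(\{x,y,z\}\subseteq U\) with \(a=x+y+z\). The following facts are needed.

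\emph{The summands are pairwise distinct.} If \(y=z\), then \(y+z=0\) and \(a=x\in U\), which is excluded.

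\emph{Each \(x\in A(a)\) lies in exactly one triple.} Since \(a-x\ne0\), the uniqueness statement after \eqref{eq:~pair-tests} says that the unordered pair \(\{y,z\}\) with \(y+z=a+x\) is unique. So \(x\) determines its triple.

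\emph{Distinct triples are disjoint.} If two triples share an element \(x\), uniqueness of the complementary pair forces them to be equal.

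Hence \(A(a)\) is the disjoint union of the \(3\)-element sets in \(\mathcal R(a)\), and \(|A(a)|=3|\mathcal R(a)|\). The points to check carefully are that \(y\ne z\) really follows from \(a\notin U\) (a repeated pair could only occur in odd characteristic), and that the three elements of a triple are distinct.

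For the bound, Proposition~\ref{prop:~anchors} gives \(2|A(a)|\ge q+1-2\sqrt q\), where in the binary case the Hasse–Weil error term is an integer bounded by \(2\sqrt q\), hence by \(\lfloor 2\sqrt q\rfloor\). Indeed, from the proof, \(q+1-2|A(a)|=\sum_{x\in U}(-1)^{\tr(1/f(x))}\) is an integer of absolute value at most \(2\sqrt q\). This gives \(|A(a)|\ge (q+1-\lfloor2\sqrt q\rfloor)/2\). Since \(|A(a)|\) is an integer divisible by three, it is at least the smallest multiple of three that is \(\ge (q+1-\lfloor2\sqrt q\rfloor)/2\). That multiple equals \(3\lceil (q+1-\lfloor2\sqrt q\rfloor)/6\rceil\), which is \(L_m\).

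Finally, the two special values are numerical checks:
\begin{itemize}
\item \(m=4\): \(q=16\), \(\lfloor 8\rfloor=8\), so \(9/6\) rounds up to \(2\) and \(L_4=6\).
\item \(m=5\): \(q=32\), \(\lfloor 2\sqrt{32}\rfloor=11\), so \(22/6\) rounds up to \(4\) and \(L_5=12\).
\end{itemize}

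The main obstacle is the partition step, specifically confirming that the uniqueness of the unordered pair decomposition of a nonzero \(b\in U+U\) applies to \(b=a+x\) for each \(x\). This needs \(a+x\ne0\), which holds because \(a\notin U\).
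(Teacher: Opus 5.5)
Your proposal is correct and follows the paper's proof essentially step for step: the summands are pairwise distinct because $a\notin U$, distinct triples are disjoint by uniqueness of the unordered two-summand decomposition of $a+x\neq 0$, hence $|A(a)|=3|\mathcal R(a)|$, and then you round the bound $2|A(a)|\ge q+1-2\sqrt q$ to integers. The only small slip is the aside that $\F_{q_0}^*U=U$ ``in characteristic two'', which holds here because $q_0=2$ rather than for every even $q_0$; also, the opening remark about an $S_3$-action is never actually used.
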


\begin{proof}
Consider a representation \(a=x+y+z\) with \(x,y,z\in U\). The three summands are pairwise distinct. Indeed, if two of them coincided, then in characteristic two they would cancel, leaving \(a\in U\), contrary to the assumption.
Let \(\mathscr T(a)\) be the set of unordered three-element subsets of \(U\) whose sum is \(a\). Two distinct elements of \(\mathscr T(a)\) cannot share an element. For if two such triples had a common element \(x\), then the remaining two elements in each triple would give two distinct unordered representations of \(a+x\) as a sum of two elements of \(U\), contradicting the uniqueness in the two-summand criterion~\eqref{eq:~pair-tests}.
Thus, the triples in \(\mathscr T(a)\) are pairwise disjoint, and their union is precisely \(A(a)\). Hence \(|A(a)|=3|\mathscr T(a)|.\)
By Proposition~\ref{prop:~anchors}, \(6|\mathscr T(a)|=2|A(a)| \ge q+1-2\sqrt q.\)
Since the left-hand side is an integer multiple of six, rounding up to the next such multiple gives~\eqref{eq:~integer-anchor}. The values \(L_4=6\) and \(L_5=12\) follow immediately.
\end{proof}

We next construct and count two-dimensional subspaces that provide the lower bound for the second generalized covering radius. For \(a\in\F_q^*\), one has \(\N(a)=a^2\) and \(\tr(a^{-2})=\tr(a^{-1})\). Hence the two-summand criterion \eqref{eq:~pair-tests} reduces to
\(a\in U+U\) if and only if \(\tr(a^{-1})=1\).
Moreover, \(\F_q\cap U=\{1\}\). Therefore,  every element of
\[H_m:= \{a\in\F_q^*\setminus\{1\}:~       \tr(a^{-1})=0\}\]
admits neither a one-column nor a two-column representation. Proposition~\ref{prop:~anchors} provides a three-column representation, so every element of \(H_m\) has covering cost exactly three.

Consequently, if \(W\le\F_q\) is two-dimensional and \(W\setminus\{0\}\subseteq H_m\), then all three nonzero elements of \(W\) are deep syndromes. We denote the number of such subspaces
by
\begin{equation*}
 D_m =\#\left\{ W\le\F_q:~ \dim_{\F_2}W=2,\; W\setminus\{0\}\subseteq H_m \right\}.
\end{equation*}
We shall determine \(D_m\) explicitly. For this purpose, define the binary Kloosterman sum
\begin{equation*}
 K_m = \sum_{z\in\F_q^*} (-1)^{\tr(z+z^{-1})}.
\end{equation*}

Note that generalized covering radii of binary Zetterberg codes have not previously been determined. A key ingredient in our lower bound is an exact enumeration of the two-dimensional syndrome subspaces all of whose nonzero vectors are deep.

\begin{proposition}\label{prop:~deep-plane-count}
For \(m\ge3\),
\begin{equation}\label{eq:~deep-plane-count}
 D_m=
 \begin{cases}
 \dfrac{(q-2)(q-8)}{48},&m\text{ odd},\\[6pt]
 \dfrac{q^2-14q+58-6K_m}{48},&m\text{ even}.
 \end{cases}
\end{equation}
\end{proposition}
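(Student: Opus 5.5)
The plan is to count \emph{ordered} pairs and use additive characters. A two-dimensional subspace \(W=\{0,a,b,a+b\}\le\F_q\) has exactly \(6\) ordered bases \((a,b)\). Hence
\[
 6D_m=\#\{(a,b)\in\F_q^2:~a,b,a+b\in H_m\},
\]
and the conditions \(a,b,a+b\in H_m\) automatically force \(a,b\neq0\) and \(a\ne b\). For \(x\in\F_q^*\) I write the indicator of the trace condition as \(\tfrac12\bigl(1+(-1)^{\tr(1/x)}\bigr)\), and I handle the exclusion of \(x=1\) separately.

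When \(m\) is odd, \(\tr(1)=1\), so \(1\) is excluded automatically. When \(m\) is even, \(\tr(1)=0\), and I must subtract the configurations in which one of \(a,b,a+b\) equals \(1\).

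Summing over pairs \((a,b)\) with \(a,b,a+b\) nonzero, I expand
\[
 \tfrac18\prod_{x\in\{a,b,a+b\}}\bigl(1+(-1)^{\tr(1/x)}\bigr)
\]
into eight terms and evaluate each in turn.
\begin{itemize}
\item The constant term contributes \((q-1)(q-2)\).
\item Each single-character term reduces, for fixed \(a\), to \(\sum_{b\ne0,a}(-1)^{\tr(1/b)}\). Since \(\sum_{x\in\F_q^*}(-1)^{\tr(1/x)}=-1\), each such sum is \(-1-(-1)^{\tr(1/a)}\), and summing over \(a\) gives an explicit constant.
\item Each two-character term, say \(\sum(-1)^{\tr(1/a)+\tr(1/b)}\), factors in the same way into products of these elementary sums, again up to the diagonal correction \(b=a\).
\item The three-character term is
\[
 S=\sum_{a\ne0}\sum_{b\ne0,a}(-1)^{\tr\left(\frac1a+\frac1b+\frac1{a+b}\right)}.
\]
Substituting \(b=ac\) with \(c\notin\{0,1\}\), I get
\[
 \frac1a+\frac1b+\frac1{a+b}=\frac1a\cdot\frac{c^2+c+1}{c^2+c}.
\]
The inner sum over \(a\) is therefore \(q-1\) when \(c^2+c+1=0\), and \(-1\) otherwise. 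The polynomial \(c^2+c+1\) has roots in \(\F_q\) exactly when \(m\) is even (the roots are the elements of \(\F_4\setminus\F_2\)). This gives \(S=-(q-2)\) for \(m\) odd and \(S=2q-(q-2)\) for \(m\) even.
\end{itemize}
For \(m\) odd, collecting these terms should yield \((q-2)(q-8)/48\), which serves as a consistency check.

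For \(m\) even, I subtract the pairs in which some element of \(\{a,b,a+b\}\) equals \(1\). By symmetry this is three times the count of \(b\notin\{0,1\}\) with \(b,1+b\in H_m\); there is no overlap, since two of the three cannot both equal \(1\). This count involves
\[
 \sum_b(-1)^{\tr\left(\frac1b+\frac1{1+b}\right)}=\sum_{b\ne0,1}(-1)^{\tr\left(\frac1{b^2+b}\right)}.
\]
The map \(b\mapsto u=b^2+b\) is two-to-one from \(\F_q\setminus\{0,1\}\) onto \(\{u\ne0:~\tr(u)=0\}\). Hence
\[
 \sum_{b\ne0,1}(-1)^{\tr\left(\frac1{b^2+b}\right)}=\sum_{u\ne0}\bigl(1+(-1)^{\tr u}\bigr)(-1)^{\tr(1/u)}=-1+K_m,
\]
and this is where the binary Kloosterman sum \(K_m\) enters. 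The single-character pieces of this subtracted count again reduce to the elementary sum \(-1\) and to \(\tr(1)=0\).

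The hypothesis \(m\ge3\) is only needed so that \(H_m\) is large enough for the formula to be meaningful; the algebra itself is uniform in \(m\). I expect no conceptual obstacle. The main risk is bookkeeping: the diagonal exclusions \(b=a\), the parity of \(\tr(1)\), and the factor \(3\) in the exclusion step. I will check the final even-\(m\) formula numerically at \(m=4\), where \(K_4\) can be computed directly, to confirm the constants \(58\) and \(-6K_m\).
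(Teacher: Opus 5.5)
Your proposal is correct, but for the main count it takes a different route from the paper.

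\textbf{Your computation checks out.} The three single-character terms together contribute $-3(q-2)$, and so do the three double-character terms. Hence the number $N$ of ordered pairs with $a,b,a+b$ nonzero and of inverse trace zero satisfies $8N=(q-1)(q-2)-6(q-2)+S$. With $S=-(q-2)$ for odd $m$ and $S=q+2$ for even $m$, this gives $8N=(q-2)(q-8)$ and $8N=(q-4)^2$ respectively. For even $m$ the correction count is $4M=(q-2)-2-2+(K_m-1)=q-7+K_m$. Then $D_m=(N-3M)/6$ is exactly the stated formula.

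\textbf{How the paper differs.} The paper avoids character sums in the main count. It reparametrizes an ordered basis by $z=a^{-1}+b^{-1}$ and $x=a^{-1}/z$. The three conditions then become $\tr(z)=0$ together with two $\F_2$-linear conditions $\tr(zx)=\tr\bigl((z^{2^{m-1}}+z)x\bigr)=0$ on $x$. For $z\ne1$ these functionals are independent, giving $q/4-2$ admissible $x$. The degenerate value $z=1$, available only for even $m$, gives $q/2-2$.

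\textbf{Where the parity of $m$ enters.} In the paper it enters through whether $z=1$ is admissible. In your argument it enters through whether $c^2+c+1$ has roots, i.e.\ whether $\F_4\subseteq\F_q$.

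\textbf{Common step.} The removal of planes containing $1$ is essentially the same in both. The paper substitutes $v=1+b^{-1}$, while you use the two-to-one map $b\mapsto b^2+b$. In either version, this is the step where $K_m$ appears.

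\textbf{Trade-off.} Your expansion is more mechanical and adapts easily to other trace patterns. The paper's linearization exposes more structure, and it immediately yields the explicit construction of deep planes given in the remark after the proof.
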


\begin{proof}
We first count the two-dimensional subspaces
\(W\le\F_q\) for which every nonzero element has inverse trace zero,
temporarily allowing \(1\in W\).
Let \((a,b)\) be an ordered basis of such a subspace. Since its three nonzero elements are \(a,b,a+b\), we require \(\tr(a^{-1})=\tr(b^{-1}) =\tr((a+b)^{-1})=0.\)
Let \(z=a^{-1}+b^{-1}\) and \(x=\frac{a^{-1}}{z}\). Since \(a,b,a+b\) are all nonzero, we have \(z\ne0\) and \(x\notin\{0,1\}\). Conversely, \(a=(zx)^{-1}\) and \(b=(z(x+1))^{-1}\), and, in characteristic two, \((a+b)^{-1}=zx(x+1).\)
Thus, \((a,b)\mapsto(z,x)\) is a bijective change of variables, and the three inverse-trace conditions become
\begin{equation}\label{eq:~plane-linear-conditions}
 \tr(z)=0,\quad
 \tr(zx)=0,\quad
 \tr\bigl(z(x^2+x)\bigr)=0.
\end{equation}

\begin{claim}\label{claim3}
For fixed \(z\) such that \(z\ne 0\), \(z\ne1\), and \(\tr(z)=0\), then there are \(\frac{q}{4}-2\) choices of \(x\) such that \(\tr(zx)=0\)  and \(\tr\bigl(z(x^2+x)\bigr)=0\).
\end{claim}

\begin{proof}[Proof of Claim \ref{claim3}]
Once \(z\) is fixed, the last two conditions in \eqref{eq:~plane-linear-conditions} are linear in \(x\). Indeed, Frobenius invariance of the absolute trace gives \(\tr(zx^2)=\tr(z^{2^{m-1}}x)\),
so these two linear functionals are represented by \(z\) and \(z^{2^{m-1}}+z\) with respect to the nondegenerate trace pairing. Since \(z\ne1\), \(z^{2^{m-1}}+z\ne0\); otherwise \(z^{2^{m-1}}=z\), and since \(\gcd(m,m-1)=1\), this would imply \(z\in\F_2^*\), hence \(z=1\). The two representing coefficients \(z\) and \(z^{2^{m-1}}+z\) are also distinct, since equality would force \(z^{2^{m-1}}=0\). Thus, the two linear functionals are independent, and their common kernel has \(q/4\) elements. Both \(0\) and \(1\) belong to this kernel, so there are \(q/4-2\) admissible choices of \(x\).
\end{proof}

There are \(q/2-1\) nonzero elements \(z\in\F_q\) with \(\tr(z)=0\). If \(m\) is odd, then \(\tr(1)=1\), so none of these values is equal to \(1\). Hence the number of ordered bases is \((q/2-1)(q/4-2)\). Since every two-dimensional binary space has \(|\operatorname{GL}_2(2)|=6\) ordered bases, we obtain
\[ D_m =\frac{(q/2-1)(q/4-2)}6 =\frac{(q-2)(q-8)}{48}.\]
No further correction is needed, because \(1\) itself has inverse trace one when \(m\) is odd.

Now suppose that \(m\) is even. Then \(\tr(1)=0\), so the exceptional value \(z=1\) occurs. In this case \(z^{2^{m-1}}+z=0\), and only the nonzero linear condition \(\tr(zx)=0\) remains.
Its kernel has \(q/2\) elements, of which \(0\) and \(1\) are inadmissible. Thus, \(z=1\) contributes \(q/2-2\) choices of \(x\), whereas each of the remaining \(q/2-2\) nonzero trace-zero values of \(z\) contributes \(q/4-2\) choices. Dividing by six, we find that the number of two-dimensional subspaces whose nonzero elements all have inverse trace zero is
\begin{equation*}
 \widetilde D_m = \frac{(q/2-2)(q/4-2)+(q/2-2)}6 = \frac{(q-4)^2}{48}.
\end{equation*}

The quantity \(\widetilde D_m\) still includes the subspaces containing \(1\), which must be removed in the definition of \(D_m\). Such a subspace has nonzero part \(\{1,b,1+b\}\), with \(b\notin\{0,1\}\), and the remaining conditions are
\(\tr(b^{-1})=\tr((1+b)^{-1})=0.\)
Put \(v=1+b^{-1}\). Since \(m\) is even, \(\tr(1)=0\), and the two conditions above are equivalent to \(\tr(v)=\tr(v^{-1})=0.\)
Indeed, \(b^{-1}=v+1\) and
\((1+b)^{-1}=1+v^{-1}\).
Let \(\psi(w)=(-1)^{\tr(w)}\). Character orthogonality gives
\begin{align*}
 \#\{v\ne0:~\tr(v)=\tr(v^{-1})=0\}
 &=\frac14\sum_{v\ne0}(1+\psi(v))(1+\psi(v^{-1}))\\
 &= \frac{q-3+K_m}{4},
\end{align*}
because
\(\sum_{v\ne0}\psi(v)=\sum_{v\ne0}\psi(v^{-1})=-1\).

The value \(v=1\) is included in this count but does not correspond
to an admissible \(b\). Hence there are
\((q-7+K_m)/4\) admissible values of \(v\).
Each two-dimensional subspace is counted twice, once by \(b\) and
once by \(1+b\). Therefore,  the number of subspaces to subtract is
\((q-7+K_m)/8\). It follows that
\[
 D_m
 =
 \widetilde D_m-\frac{q-7+K_m}{8}
 =
 \frac{q^2-14q+58-6K_m}{48},
\]
which proves~\eqref{eq:~deep-plane-count}.
\end{proof}

\begin{remark}
The preceding argument also gives a direct construction of a deep two-dimensional subspace. Choose \(z\in\F_q^*\) such that \(\tr(z)=0\) and \(\tr(z^{-1})=1\). By character orthogonality,
\[\#\{z\ne0:~\tr(z)=0,\ \tr(z^{-1})=1\} = \frac{q-1-K_m}{4},\]
which is positive for \(m\ge4\) by the Weil bound. For such a \(z\), the two linear conditions in \eqref{eq:~plane-linear-conditions} are independent, so their common kernel has \(q/4\ge4\) elements. Choose \(x\notin\{0,1\}\) in this kernel, and set \(a=(zx)^{-1}\) and \(b=(z(x+1))^{-1}.\) Then \(a,b,a+b\) all have inverse trace zero.
It remains to verify that \(\langle a,b\rangle_{\F_2}\) does not contain \(1\). If \(m\) is odd, this is automatic because \(\tr(1)=1\). Suppose therefore that \(m\) is even. If \(a=1\), then \(x=z^{-1}\), and the third condition in \eqref{eq:~plane-linear-conditions} would give \(\tr(z(x^2+x)) =\tr(z^{-1}+1)=1,\) a contradiction. The case \(b=1\) is analogous. Finally, if \(a+b=1\), then \(x^2+x=z^{-1}\), but the equation \(X^2+X=z^{-1}\) has no solution in \(\F_q\) because \(\tr(z^{-1})=1\) (see \cite{BRS1967}). Thus, \(\langle a,b\rangle_{\F_2}\) is a deep two-dimensional subspace, and hence is counted by \(D_m\).
\end{remark}

\begin{proof}[Proof of Theorem~\ref{thm:~binary-main}\textup{(i)}]
We first consider \(m\ge4\). By Proposition~\ref{prop:~deep-plane-count}, there exists a two-dimensional subspace \(W\le\F_q\) whose three nonzero elements all require at least three columns. By the short-code lifting described in Remark~\ref{rem:~short-code-lifting}, any common cover of \(W\) by \(\ell\) columns gives a binary \([\ell,2,\ge3]\) code. Therefore,  it follows from the Hamming bound~\cite[Theorem~1.12.1]{HuffmanPless2003} that \(\ell\ge h_2(2)=5,\) and hence \(\rho_2(\mathcal B_m)\ge 5\).

For \(m\ge6\), Theorem~\ref{thm:~main-bounds}\textup{(i)}, applied with \(t=2\), gives the upper bound \(\rho_2(\mathcal B_m)\le5\).
It remains to treat \(m=4,5\), which lie outside the range of the general binary upper bound. We use Proposition~\ref{thm:~basis-selection} together with the refined estimate in Lemma~\ref{lem:~anchor-divisibility}. For \(t=2\), the incidence condition~\eqref{eq:~abstract-incidence} becomes \(3L_m>q+1\). When \(m=4\), this is \(18>17\), and when \(m=5\), it is \(36>33\). Thus, \(\rho_2(\mathcal B_m)\le5\) also in these two cases. Consequently, \(\rho_2(\mathcal B_m)=5\) for every \(m\ge 4\). In particular, no finite-field enumeration is needed for the cases \(m=4,5\).

We now consider the remaining small extension degrees. For \(m=1,2\), the codes \(\mathcal B_m\) have dimension one and lengths three and five, respectively. Since the sum of all elements of \(U\) is zero, the all-one word lies in the kernel of the parity-check matrix. As the kernel is one-dimensional, it generates the whole code. Hence \(\mathcal B_1\) and \(\mathcal B_2\) are the binary repetition codes of lengths three and five. 
Using \cite[Definition 8]{EFS2021} with \(t=2\), we can easily obtain \(\rho_2(\mathcal B_1)=2\) and \(\rho_2(\mathcal B_2)=3.\)

Finally, let \(m=3\). Then \(|U|=9\), and \(U\) contains the subgroup \(\F_4^*\) of order three. Its three cosets partition \(U\) into three disjoint triples, each having sum zero. Therefore,  the indicator vectors of these triples are codewords of \(\mathcal B_3\), and they are linearly independent because their supports are disjoint. Since \(\dim\mathcal B_3=9-6=3\), these three codewords form a basis, and
\(\mathcal B_3 \cong \operatorname{Rep}_2(3) \oplus \operatorname{Rep}_2(3) \oplus \operatorname{Rep}_2(3).\)
Note that \(\rho_2(\operatorname{Rep}_2(3))=\rho_2(\mathcal B_1)=2\). By \cite[Proposition~25]{EFS2021}, we have \(\rho_2(\mathcal B_3)=3\cdot2=6.\)
This completes the proof.
\end{proof}

\begin{remark}
The preceding count also shows that the extremal two-dimensional subspaces are far from unique. For \(m\ge4\), there are at least \((q+1)D_m\) distinct two-dimensional subspaces of \(\F_{q^2}\) with covering cost five. Indeed, if \(u\in U\), then multiplication by \(u\) permutes the column set \(U\), and hence \(\tau_U(uW)=\tau_U(W)\) for every subspace \(W\le\F_{q^2}\). Thus, each two-space \(W\) counted by \(D_m\) produces \(q+1\) two-spaces \(uW\), all having covering cost five.

These spaces are all distinct. If \(uW=vW'\) for \(u,v\in U\) and \(W,W'\le\F_q\), choose any nonzero vector in the common space. Then \(u/v\in\F_q\). Since also \(u/v\in U\), we have
\(\frac{u}{v}\in\F_q\cap U=\{1\},\) so \(u=v\), and hence \(W=W'\).
Therefore,  the number of distinct two-spaces attaining the second radius is at least \((q+1)D_m\). The first values \(D_4,D_5,D_6,D_7,D_8 = 2,15,69,315,1288\) follow directly from Proposition~\ref{prop:~deep-plane-count}. Independent checks for these small fields are recorded in the supplementary material.
\end{remark}

\subsection{The third radius on an infinite subfamily}

For the remainder of this subsection, let \(m=2s\), \(s\ge2\), and \(\mathcal Z_s=\mathcal B_{2s}\). The upper bound \(\rho_3(\mathcal Z_s)\le7\) is already given by \eqref{eq:~third-range}. We show that this bound is attained whenever \(4\mid s\). Thus, it remains to construct a three-dimensional subspace \(W\le\F_{q^2}\) satisfying \(\tau_U(W)>6\).
The proof proceeds in three steps. We first show that any six-column cover of a three-dimensional deep-syndrome subspace necessarily gives rise to the unique binary \([6,3,3]\) code, reducing the covering problem to seven affine planes in \(W\). For each such plane, we then eliminate the unit-circle variables and obtain a polynomial obstruction in one variable. Finally, we fix these obstruction polynomials over \(\F_{256}\) and use a self-reciprocal factor criterion to exclude their norm-one roots throughout the required infinite family of extensions.

Since the ordinary covering radius of the binary Zetterberg code is three, the two-summand criterion~\eqref{eq:~pairtrace} shows that its deep syndromes are
\begin{equation*}
 \mathcal D = \{a\in\F_{q^2}\setminus(U\cup\{0\}):~        \tr(\N(a)^{-1})=0\}.
\end{equation*}
For \(a\in\F_q^*\), this condition reduces to \(\tr(a^{-1})=0\), since \(\N(a)=a^2\) and the absolute trace is invariant under squaring. We call a subspace \(W\) a \emph{deep-syndrome subspace} if every nonzero element of \(W\) belongs to \(\mathcal D\).

{\em Reduction of a six-column cover.} 
Up to coordinate permutation, the binary \([6,3,3]\) code is
unique~\cite{LS2024}. We use the representative with generator matrix
\begin{equation}\label{eq:~G633}
 G=
 \begin{pmatrix}
 1&1&0&1&0&0\\
 1&0&1&0&1&0\\
 0&1&1&0&0&1
 \end{pmatrix}.
\end{equation}
Its four codewords of weight three have supports \(\{1,2,4\}\), \(\{1,3,5\}\), \(\{2,3,6\}\), and \(\{4,5,6\}\). Any two of these supports meet in exactly one coordinate, these six pairwise intersections are all distinct, and the four corresponding codewords sum to zero.

\begin{proposition}\label{prop:~six-reduction}
Let \(W\le\F_{q^2}\) be a three-dimensional deep-syndrome subspace.
If \(W\) can be covered by at most six columns of \(U\), then there
exists an affine plane \(T\subset W\), with \(0\notin T\), such that
for any three distinct elements \(a,b,c\in T\), there are
\(x,y,z\in U\) satisfying
\begin{equation}\label{eq:~sixcolumns}
 a+x+y,\qquad b+x+z,\qquad c+y+z\in U.
\end{equation}
There are exactly seven possible affine planes \(T\).
\end{proposition}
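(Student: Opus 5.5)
The plan is to combine the short-code lifting of Remark~\ref{rem:~short-code-lifting} with the rigidity of the binary $[6,3,3]$ code. Suppose $W$ is covered by $\ell\le 6$ columns $u_1,\dots,u_\ell\in U$. Since every nonzero element of $W$ is deep, it has covering cost exactly three. The lifting then produces a binary $[\ell,3,\ge3]$ code $D$ with $\sigma|_D:D\to W$ an isomorphism. The Hamming bound ($2^3(1+\ell)\le 2^\ell$) gives $\ell\ge6$, so $\ell=6$. The code $D$ is then a $[6,3,3]$ code: its minimum distance cannot be $4$ or more, because a binary $[6,3,4]$ code would contradict the Griesmer bound $3+2+1+\lceil 4/4\rceil$ being no larger than $6$, which fails since the sum is $7$. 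By uniqueness~\cite{LS2024}, after relabelling the six columns we may assume $D$ is the row space of $G$ in~\eqref{eq:~G633}.

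Next I use the combinatorics of $G$ recorded before the proposition. The four weight-three codewords have supports $\{1,2,4\},\{1,3,5\},\{2,3,6\},\{4,5,6\}$, and these codewords sum to zero. Their images $a_1,\dots,a_4\in W$ under $\sigma$ are nonzero and sum to zero. Any three of them are linearly independent: otherwise the three codewords would sum to zero, and then the fourth would be zero. So $\{a_1,\dots,a_4\}$ consists of a basis together with its sum, i.e.\ an affine plane $T$ of $W$ not containing $0$. The affine planes of $W$ avoiding $0$ are exactly the complements $W\setminus H$ of the seven hyperplanes $H$ of $W$, each with four points. This gives precisely seven candidates for $T$; it is the count asserted, which I would state as "exactly seven possible $T$" in the sense that $T$ is determined by $W$ up to these seven choices.

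Now I extract~\eqref{eq:~sixcolumns}. Take any three distinct $a,b,c\in T$; they correspond to three of the four supports, say $\{1,2,4\},\{1,3,5\},\{2,3,6\}$, after relabelling coordinates by the symmetry of the configuration. The automorphism group of this configuration acts transitively on triples of lines, since the four supports are the lines of a complete quadrilateral, and every pair meets in a distinct point. Then $a=u_1+u_2+u_4$, $b=u_1+u_3+u_5$, $c=u_2+u_3+u_6$. Setting $x=u_1$, $y=u_2$, $z=u_3$ gives $a+x+y=u_4\in U$, $b+x+z=u_5\in U$, and $c+y+z=u_6\in U$, working in characteristic two.

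The main obstacle is making sure the relabelling is legitimate. For a chosen triple I need the three shared coordinates of the three lines to serve as $x,y,z$, with the private coordinate of each line giving the remaining column. In the quadrilateral each pair of the chosen three lines meets in its own point, and each line has exactly one point not on the other two chosen lines: the point it shares with the fourth line. So the pattern is uniform, and no appeal to automorphisms is really needed; one simply reads off the pairwise intersections. I will also note that the case $\ell<6$ is excluded by the Hamming bound, so "at most six" reduces to "exactly six".
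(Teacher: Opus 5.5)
Your proposal is correct and follows the paper's proof essentially step for step: lifting to an $[\ell,3,\ge 3]$ code, the Hamming bound forcing $\ell=6$, Griesmer excluding $d\ge 4$, uniqueness of the $[6,3,3]$ code, the images of the four weight-three codewords forming one of the seven affine planes avoiding $0$, and reading $x,y,z$ off the pairwise support intersections. One small slip: for a binary $[6,3,4]$ code the Griesmer sum is $\lceil 4/1\rceil+\lceil 4/2\rceil+\lceil 4/4\rceil=4+2+1=7$, not $3+2+1+\lceil 4/4\rceil$, although your conclusion $7>6$ is right.
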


\begin{proof}
Suppose that \(W\) is covered by \(I=\{u_1,\ldots,u_\ell\}\subseteq U\), where \(\ell\le6\).
By Remark~\ref{rem:~short-code-lifting}, this common cover gives a binary linear code
\(D\le\F_2^\ell\) with parameters \([\ell,3,\ge3]_2\), together with an isomorphism
\[\sigma|_D:~D\longrightarrow W, \quad
 \sigma(e_1,\ldots,e_\ell)=\sum_{i=1}^{\ell}e_i u_i.\]
The Hamming bound~\cite[Theorem~1.12.1]{HuffmanPless2003} gives \(2^3(1+\ell)\le2^\ell\), which fails for \(\ell\le5\). Hence \(\ell=6\).
The minimum distance of \(D\) cannot be at least four, since the Griesmer bound~\cite[Theorem~2.7.4]{HuffmanPless2003} would then require \(6\ge4+2+1=7,\) a contradiction. Thus, \(D\) is a binary \([6,3,3]\) code. After permuting the six covering columns if necessary, we may therefore identify the coordinates of \(D\) with those of the code in~\eqref{eq:~G633}.

Let \(d_1,d_2,d_3,d_4\) be its four weight-three codewords, and put \(t_i=\sigma(d_i)\in W\). Since \(\sigma|_D\) is an isomorphism, the support relations among the \(d_i\)'s are transferred faithfully to the corresponding syndromes. In particular, \(d_1+d_2+d_3+d_4=0\), and hence \(t_1+t_2+t_3+t_4=0.\) Writing \(a=t_1\), \(b=t_2\), and \(c=t_3\), we obtain \(T=\{a,b,c,a+b+c\}\subset W\setminus\{0\}.\) Any three elements of \(T\) are linearly independent, because the corresponding three weight-three codewords are linearly independent.
Moreover, \(T=a+\spn{a+b,a+c},\) so \(T\) is a coset of a two-dimensional subspace of \(W\) and does not contain zero. Equivalently, \(T=\{w\in W:~\varphi(w)=1\}\) for some nonzero binary linear functional \(\varphi\) on \(W\).
Since \(\dim_{\F_2}W=3\), there are \(2^3-1=7\) nonzero linear functionals, and hence exactly seven affine planes in \(W\) that do not contain zero.

It remains to translate the support structure of the \([6,3,3]\) code back into column representations. Choose any three distinct elements of \(T\), and denote them by \(a,b,c\). Let \(d_a,d_b,d_c\) be the corresponding weight-three codewords. By the support pattern above, each pair of these three words has exactly one common coordinate, and the three common coordinates are distinct. Denote the associated columns of \(U\) by \(x,y,z\), where \(x\) is shared by \(d_a,d_b\), \(y\) by \(d_a,d_c\), and \(z\) by \(d_b,d_c\).
Each of the three supports has one remaining coordinate not shared with the other two. Let the corresponding columns be \(u_a,u_b,u_c\), respectively. Since \(\sigma(d_a)=a\), \(\sigma(d_b)=b\), and \(\sigma(d_c)=c\), and the field has characteristic two, we have
\[
 a=x+y+u_a,\quad
 b=x+z+u_b,\quad
 c=y+z+u_c.
\]
Therefore, 
\[
 u_a=a+x+y,\quad
 u_b=b+x+z,\quad
 u_c=c+y+z.
\]
All three \(u_a,u_b,u_c\) are among the six covering columns and hence belong to \(U\). This proves~\eqref{eq:~sixcolumns}.
\end{proof}

Thus, a six-column cover can exist only if one of the seven affine planes admits a solution of~\eqref{eq:~sixcolumns}. We next eliminate \(y\) and \(z\) and obtain a necessary polynomial condition on \(x\).

{\em Polynomial elimination.}
We next eliminate the unit-circle variables in \eqref{eq:~sixcolumns}. Throughout this argument, \(X\) denotes an indeterminate, while \(x\in U\) denotes a possible solution of \eqref{eq:~sixcolumns}. All auxiliary expressions are first defined in \(\F_q[X]\) and are evaluated at \(X=x\) only after a solution is assumed to exist. This distinction is important because the final obstruction will be a polynomial that can be studied uniformly over field extensions.

Let \(a,b,c\in\F_q\setminus\{0,1\}\). In \(\F_q[X]\), put
\begin{equation}\label{eq:~elim-first}
\begin{gathered}
 p_a=X+a,~~ p_b=X+b,~~
 S_a=1+aX,~~ S_b=1+bX,~~
 D=1+c(a+b+c),\\
 E_1=cXp_b+(p_a+c)S_b,~~
 F_1=cXp_a+(p_b+c)S_a,~~
 G_1=X(p_aS_b+p_bS_a).
\end{gathered}
\end{equation}
Next define
\begin{equation}\label{eq:~elim-second}
\begin{aligned}
 L_2&=E_1^2+p_bDE_1S_b+Xp_bD^2S_b,\\
 L_1&=p_b(E_1F_1+DG_1),\\
 L_0&=G_1^2+p_bG_1F_1S_b+Xp_bF_1^2S_b,\\
 L&=p_aL_2S_a+L_1S_b,\\
 M&=Xp_aL_2S_a+L_0.
\end{aligned}
\end{equation}
Finally, set
\begin{equation}\label{eq:~elim-R}
 R_{a,b,c}(X) =M^2+p_aS_aML+Xp_aS_aL^2.
\end{equation}
These polynomials arise naturally from two successive elimination steps:~ first \(z\) is eliminated, then \(y\).

\begin{lemma}\label{lem:~elimination}
Suppose that \(a,b,c\in\F_q\setminus\{0,1\}\) and \(x,y,z\in U\) satisfy~\eqref{eq:~sixcolumns}. Then
\[R_{a,b,c}(x)=0.\]
\end{lemma}

\begin{proof}
From now on, all expressions in \eqref{eq:~elim-first}--\eqref{eq:~elim-R} are evaluated at \(X=x\). Since \(\F_q\cap U=\{1\}\) and \(a,b\ne1\), the quantities \(S_a(x)=1+ax\), \(S_b(x)=1+bx\), \(x+a\), and \(x+b\) are all nonzero. For instance, \(S_a(x)=0\) would give \(x=a^{-1}\in\F_q\cap U=\{1\}\), and hence \(a=1\), a contradiction.
We first convert the first two unit-circle conditions in \eqref{eq:~sixcolumns} into quadratic equations. If \(u,v\in U\) and \(u+v=d\), then \(d^q=u^{-1}+v^{-1}=\frac{d}{uv},\) so \(uv=d/d^q\). Applying this to the pair \(y,a+x+y\), whose sum is \(a+x\), we have
\[y(a+x+y) =\frac{a+x}{a+x^{-1}} =\frac{x(x+a)}{ax+1}.\]
Similarly, we have \(z(b+x+y) =\frac{b+x}{b+x^{-1}} =\frac{x(x+b)}{bx+1}.\) Thus
\begin{equation}\label{eq:~elim-quadratics}
 y^2+p_ay+A=0~~{\rm and}~~ z^2+p_bz+B=0,
\end{equation}
where \(A=\frac{Xp_a}{S_a}\) and \(B=\frac{Xp_b}{S_b}\).
We now use the remaining condition \(c+y+z\in U\). Since \(c\in\F_q\) and \(y,z\in U\), we have \(c^q=c\), \(y^q=y^{-1}\), and \(z^q=z^{-1}\). Hence \(\N(c+y+z)=1\), and after multiplying by \(yz\) we obtain
\[ cyz(c+y+z)+yz+(y+z)(c+y+z)=0.\]
Using~\eqref{eq:~elim-quadratics} to replace \(y^2\) and \(z^2\), this relation reduces to
\begin{equation}\label{eq:~elim-bilinear}
 Dyz+Ey+Hz+G=0,
\end{equation}
where \(E=\frac{E_1}{S_b}\), \(H=\frac{F_1}{S_a}\), and \(G=\frac{G_1}{S_aS_b}\).

We next eliminate \(z\). Rewrite~\eqref{eq:~elim-bilinear} as \((Dy+H)z=Ey+G\). Rather than dividing by \(Dy+H\), which might vanish, multiply the
second equation in~\eqref{eq:~elim-quadratics} by \((Dy+H)^2\) and
use the preceding identity. This gives
\[(E^2+p_bDE+BD^2)y^2 +p_b(EH+DG)y +(G^2+p_bGH+BH^2)=0.\]
This identity remains valid even when \(Dy+H=0\). The three coefficients above are
\(\frac{L_2}{S_b^2}\), \(\frac{L_1}{S_aS_b}\), and \(\frac{L_0}{S_a^2S_b^2}\), respectively. Clearing the nonzero denominators yields
\[ S_a^2L_2y^2+S_aS_bL_1y+L_0=0.\]
Substituting \(y^2=p_ay+Xp_a/S_a\) from~\eqref{eq:~elim-quadratics}, this quadratic relation becomes \(S_aLy+M=0,\) where \(L\) and \(M\) are exactly those defined in
\eqref{eq:~elim-second}.

It remains to eliminate \(y\). Again, we avoid dividing by the possibly vanishing factor \(L(x)\). Multiplying the first equation in \eqref{eq:~elim-quadratics} by \((S_aL)^2\) and using \(S_aLy=M\), we obtain
\[ M^2+p_aS_aML+Xp_aS_aL^2=0.\]
By~\eqref{eq:~elim-R}, this is precisely \(R_{a,b,c}(x)=0\).
Thus, every solution of~\eqref{eq:~sixcolumns} gives a root of the one-variable polynomial \(R_{a,b,c}(X)\). No division by the potentially vanishing factors \(Dy+H\) or \(L(x)\) has been used, so the argument also covers all degenerate cases.
\end{proof}

The lemma reduces the original three-variable unit-circle system to a single polynomial obstruction in \(x\). To make this obstruction effective over an infinite family of extensions, we next choose the coefficients from a fixed finite subfield.

{\em Norm-one roots and extension degrees.}
We next recall the self-reciprocal factor criterion of Meyn and G\"otz~\cite[Theorem~1]{MeynGoetz1989}, which will allow us to control possible norm-one roots uniformly over field extensions.
Let \(q_*\) be a power of two, and let \(P\in\F_{q_*}[X]\) be a nonzero polynomial with \(P(1)\ne0\). Suppose that \(P(\xi)=0\) and that \(\xi^{q_*^r+1}=1\). Then
\begin{equation}\label{eq:~root-degree}
 [\F_{q_*}(\xi):~\F_{q_*}]=2e,~~
 e\mid r,~~
 \frac{r}{e}\ \text{is odd},~~
 2e\le\deg P.
\end{equation}
Moreover, whenever \(e\mid j\) and \(j/e\) is odd, the same element \(\xi\) also satisfies
\(\xi^{q_*^j+1}=1\).

In particular, if \(\deg P=12\), then every possible norm-one root of \(P\) has degree \(2e\) over \(\F_{q_*}\) with \(1\le e\le6\), regardless of the size of the ambient extension.
Thus, an a priori unbounded family of extension degrees can be reduced to finitely many possibilities for \(e\), and hence to a small number of fixed norm-one equations.

{\em Fixed witnesses over \(\F_{256}\).} 
We now choose the three-dimensional deep-syndrome subspaces used in the proof. The key point is that, when \(4\mid s\), these witnesses can be chosen inside a fixed subfield, independently of the extension degree. Write \(s=4r\). Then \(m=2s=8r\), and hence \(q=2^{8r}=256^r.\) 
Therefore,  \(\F_{256}\subseteq\F_q\subseteq\F_{q^2}.\)
Thus, every three-dimensional \(\F_2\)-subspace of \(\F_{256}\) is also a three-dimensional \(\F_2\)-subspace of \(\F_q\), and hence of the syndrome space \(\F_{q^2}\). Consequently, although the ambient field varies with \(r\), the candidate subspace \(W\) may be chosen once and for all inside the fixed field \(\F_{256}\).

We use two such witnesses, \(W_{\mathrm o},W_{\mathrm e}\le\F_{256}\), according to the parity
of \(r\). The parity determines how the absolute trace \(\Tr_{\F_q/\F_2}\) restricts to \(\F_{256}\), and hence which witness becomes a deep-syndrome subspace in \(\F_q\).
Since all affine planes of either witness are contained in \(\F_{256}\), the elements \(a,b,c\) entering the elimination also lie in \(\F_{256}\). It follows that the corresponding obstruction polynomials \(R_{a,b,c}(X)\), and later \(P_T(X)\), have coefficients in \(\F_{256}\) and are independent of \(r\). This is what allows the same finite collection of polynomials to control all extensions \(\F_{256^r}\).
Fix $q_*=256$ and a root $\theta$ of the irreducible polynomial $X^8+X^4+X^3+X^2+1$ over $\F_2$, so that
\begin{equation}\label{eq:~field256}
 \F_{256}=\F_2(\theta),~~ \theta^8+\theta^4+\theta^3+\theta^2+1=0.
\end{equation}
For \(j=\sum_{i=0}^7 j_i2^i\), write \([j]_\theta=\sum_{i=0}^7 j_i\theta^i.\)
Thus, the integers appearing below are labels for elements of \(\F_{256}\) in this fixed polynomial basis, rather than integer residues. Define two three-dimensional binary subspaces of \(\F_{256}\) by
\begin{equation*}
 W_{\mathrm o}=\spn{[2]_\theta,[9]_\theta,[69]_\theta}~~{\rm and}~~
 W_{\mathrm e}=\spn{[16]_\theta,[33]_\theta,[75]_\theta}.
\end{equation*}
Writing $W^*=W\setminus\{0\}$ and listing field elements by their integer labels, their nonzero elements are
\begin{equation*}
\begin{aligned}
 W_{\mathrm o}^{*}=\{2,9,11,69,71,76,78\}~~{\rm and}~~
 W_{\mathrm e}^{*}=\{16,33,49,75,91,106,122\}.
\end{aligned}
\end{equation*}
Both subspaces avoid \(1\).
For the representation~\eqref{eq:~field256}, it follows from Magma~\cite{magma} that the absolute trace on the polynomial basis \(1,\theta,\ldots,\theta^7\) gives
\[\Tr_{\F_{256}/\F_2}(\theta^i)=
\begin{cases}
1,&i=5,\\
0,&i\ne5.
\end{cases}\]
Hence the absolute trace of an element \(\sum_{i=0}^7 a_i\theta^i\) is simply its coefficient \(a_5\) of \(\theta^5\).
The inverses of the elements of \(W_{\mathrm o}^{*}\), in the order
listed above, have labels
\begin{equation}\label{eq:~inverse-certificate}
 142,~~157,~~152,~~78,~~4,~~22,~~69.
\end{equation}
Each of these elements has zero coefficient of \(\theta^5\). Hence all seven nonzero elements of \(W_{\mathrm o}\) have inverse trace zero over \(\F_{256}\).

For each of the seven affine planes \(T\) contained in either witness
and avoiding zero, we order the four labels increasingly and take the
first three as \(a,b,c\). Substituting these values into
\eqref{eq:~elim-first}--\eqref{eq:~elim-R} and carrying out the
arithmetic in \(\F_{256}[X]\), we obtain
\begin{equation}\label{eq:~normalized-obstruction}
 R_{a,b,c}(X)=\kappa_T X^2P_T(X),
\end{equation}
where \(\kappa_T\in\F_{256}^*\) and \(P_T\) is a monic reciprocal polynomial of degree twelve.
We write
\begin{equation}\label{eq:~palindrome-format}
 P_T(X)
 =\sum_{i=0}^{5}c_i(X^i+X^{12-i})+c_6X^6.
\end{equation}
By Magma~\cite{magma}, the coefficients of all fourteen polynomials are listed in Table~\ref{tab:~polynomial-certificate}, where the coefficients are interpreted as in \eqref{eq:~palindrome-format}. Since the characteristic is two, evaluating~\eqref{eq:~palindrome-format} at \(X=1\) gives \(P_T(1)=c_6\), which is nonzero in every row.

\begin{table}[htbp]
\centering
\caption{Complete normalized obstruction polynomials over \(\F_{256}\)}
\label{tab:~polynomial-certificate}
\small
\setlength{\tabcolsep}{10pt}
\renewcommand{\arraystretch}{1.15}
\begin{tabular}{clrrrrrrr}
\toprule
Witness & \(T\) & \(c_0\)&\(c_1\)&\(c_2\)&\(c_3\)&\(c_4\)&\(c_5\)&\(c_6\)\\
\midrule
\(W_{\mathrm o}\) & \(\{2,9,69,78\}\) &1&11&248&11&57&0&95\\
&\(\{2,9,71,76\}\) &1&11&196&159&74&202&234\\
&\(\{2,11,69,76\}\) &1&9&226&133&207&120&49\\
&\(\{2,11,71,78\}\) &1&9&180&155&107&153&83\\
&\(\{9,11,69,71\}\) &1&2&133&34&162&47&105\\
&\(\{9,11,76,78\}\) &1&2&74&190&40&88&232\\
&\(\{69,71,76,78\}\) &1&2&122&126&44&255&83\\
\midrule
\(W_{\mathrm e}\) &\(\{16,33,75,122\}\) &1&49&215&29&178&68&29\\
&\(\{16,33,91,106\}\) &1&49&236&190&225&3&10\\
&\(\{16,49,75,106\}\) &1&33&188&242&187&189&140\\
&\(\{16,49,91,122\}\) &1&33&52&58&251&12&220\\
&\(\{33,49,75,91\}\) &1&16&176&58&38&89&36\\
&\(\{33,49,106,122\}\) &1&16&238&122&51&51&103\\
&\(\{75,91,106,122\}\) &1&16&9&92&240&96&168\\
\bottomrule
\end{tabular}
\end{table}

The remaining certificates rule out the relevant norm-one roots:~
\begin{equation}\label{eq:~gcd-certificates}
\begin{array}{ll}
 \gcd(P_T,X^{q_*^{15}+1}+1)=1,
 & T\subset W_{\mathrm o}^{*},\\[2pt]
 \gcd(P_T,X^{q_*^{4}+1}+1) = \gcd(P_T,X^{q_*^{6}+1}+1)=1,
 & T\subset W_{\mathrm e}^{*}.
\end{array}
\end{equation}
Here each line refers only to the seven affine planes listed for the corresponding witness.

These coprimality relations are verified entirely inside \(\F_{256}[X]\); no large extension field needs to be constructed. For each relevant pair \((T,j)\), we first reduce \(X^{q_*^j+1}+1\) modulo \(P_T\), using repeated squaring, and denote the resulting remainder by \(H_{T,j}\). The Euclidean algorithm then produces polynomials \(A_{T,j},B_{T,j}\in\F_{256}[X]\) satisfying
$
 A_{T,j}P_T+B_{T,j}H_{T,j}=1.
$ 
Thus, \eqref{eq:~gcd-certificates} consists of exact algebraic certificates rather than numerical tests over finitely many extensions. The supplementary material records the remainders and the corresponding B\'ezout identities, and also verifies the factorizations~\eqref{eq:~normalized-obstruction}.

We can now use these fixed certificates to handle the whole infinite subfamily. It is enough to construct a three-dimensional subspace \(W\le\F_{q^2}\) satisfying \(\tau_U(W)>6\).

\begin{proof}[Proof of Theorem~\ref{thm:~binary-main}\textup{(ii)}]
Write \(s=4r\). Then \(m=8r\) and \(q=2^{8r}=q_*^r\), where \(q_*=256\). In particular, \(\F_{256}\subseteq\F_q\). The upper bound \(\rho_3(\mathcal Z_s)\le7\) follows from~\eqref{eq:~third-range}.
We first choose a three-dimensional deep-syndrome subspace according to the parity of \(r\). For \(a\in\F_{256}\), trace transitivity gives
\(\Tr_{\F_q/\F_2}(a) = (r\bmod2)\Tr_{\F_{256}/\F_2}(a).\)
If \(r\) is odd, we take \(W=W_{\mathrm o}\). By \eqref{eq:~inverse-certificate}, the inverses of all seven nonzero elements of \(W_{\mathrm o}\) have trace zero over \(\F_{256}\), and hence  also over \(\F_q\).
If \(r\) is even, we take \(W=W_{\mathrm e}\). In this case the displayed trace identity shows that every element of \(\F_{256}\) has absolute trace zero when regarded as an element of \(\F_q\). In particular, the inverse of every nonzero element of \(W_{\mathrm e}\) has trace zero over \(\F_q\).
Both witnesses avoid \(1\), while \(\F_q\cap U=\{1\}\). Hence in either case every nonzero element of \(W\) is a deep syndrome, and \(W\) is a three-dimensional deep-syndrome subspace of \(\F_q\).

Assume, for contradiction, that \(W\) can be covered by six columns. Then Proposition~\ref{prop:~six-reduction} yields an affine plane \(T\subset W\), three distinct elements \(a,b,c\in T\), and \(x,y,z\in U\) satisfying~\eqref{eq:~sixcolumns}.
Since \(W\subseteq\F_{256}\), the elements \(a,b,c\) all lie in \(\F_{256}\). Thus, the corresponding polynomial \(P_T(X)\) is one of the fixed degree-twelve polynomials in Table~\ref{tab:~polynomial-certificate}. By Lemma~\ref{lem:~elimination}, \(R_{a,b,c}(x)=0\). By \eqref{eq:~normalized-obstruction}, we have \(\kappa_Tx^2P_T(x)=0.\)
Since \(x\in U\), we have \(x\ne0\), and hence  \(P_T(x)=0\).
On the other hand, \(x\in U\) and \(q=q_*^r\), so \(x^{q_*^r+1}=1.\) Moreover, \(P_T(1)\ne0\). Therefore,  we may apply the root-degree criterion~\eqref{eq:~root-degree}. It follows that
\([\F_{q_*}(x):\F_{q_*}]=2e\)
for some integer \(e\) satisfying \(1\le e\le6\), \(e\mid r\), and \(r/e\) odd.

Assume first that \(r\) is odd. Then \(e\) is odd, so \(e\in\{1,3,5\}\). Each of these values divides \(15\), with \(15/e\) odd. Therefore,  the same root-degree criterion implies \(x^{q_*^{15}+1}=1\). Thus, \(x\) would be a common root of \(P_T(X)\) and \(X^{q_*^{15}+1}+1\), contradicting the first line of~\eqref{eq:~gcd-certificates}.

Now suppose that $r$ is even. Write $\nu_2(n)$ for the exponent of $2$ in a positive integer $n$. Since \(r/e\) is odd, \(e\) and \(r\) have the same \(2\)-adic valuation:~ \(\nu_2(e)=\nu_2(r).\)
If \(\nu_2(r)=1\), then \(e\le6\) forces \(e\in\{2,6\}\). In either case \(e\mid6\) and \(6/e\) is odd, so \(x^{q_*^6+1}=1\), contradicting \(\gcd(P_T,X^{q_*^6+1}+1)=1\).
If \(\nu_2(r)=2\), then the only possibility with \(e\le6\) is \(e=4\). Hence \(x^{q_*^4+1}=1\), contradicting \(\gcd(P_T,X^{q_*^4+1}+1)=1\).
Finally, if \(\nu_2(r)\ge3\), then \(\nu_2(e)=\nu_2(r)\) implies \(e\ge8\), contradicting \(e\le6\).
Thus, no six-column cover of \(W\) exists. Hence \(\tau_U(W)\ge7\), and hence  \(\rho_3(\mathcal Z_s)\ge7\). Together with the previously established upper bound, this gives
\(\rho_3(\mathcal Z_s)=7\) whenever \(4\mid s.\)
Equivalently, \(\rho_3(\mathcal B_m)=7\) whenever \(8\mid m.\)
\end{proof}

\begin{remark}
Theorem~\ref{thm:~binary-main}\textup{(ii)} is computer-assisted only in its finite certificate component. Exact arithmetic over \(\F_{256}\) is used to verify the factorizations \eqref{eq:~normalized-obstruction}, the coefficients listed in Table~\ref{tab:~polynomial-certificate}, and the B\'ezout identities underlying~\eqref{eq:~gcd-certificates}.
The passage from these finite certificates to the infinite family is entirely theoretical. It follows from the self-reciprocal root-degree criterion~\eqref{eq:~root-degree} and the argument above, rather than from checking finitely many values of \(s\).
The direct certificate for \(s=4\) is retained in the supplementary material as an independent verification. We make no claim that \(\rho_3(\mathcal Z_s)=7\) for \(s\ge4\) when \(4\nmid s\).
\end{remark}

\section{Conclusion}\label{sec:~conclusion}

In this paper, we studied the generalized covering radii of generalized Zetterberg codes through their norm-one column configurations.
We first established general upper and lower bounds over arbitrary finite fields. The upper bound is obtained from a common-summand argument with an adapted basis, while the lower bounds come from projective counting and subfield constructions. In a broad range of parameters, these estimates restrict the generalized covering radius to two consecutive values.
We then investigated the extremal case \(\rho_t=t\). By estimating the relative intersections \(U\cap(W\setminus L)\) using additive Fourier coefficients and Kloosterman-sum bounds, we obtained explicit high-order ranges in which every \(t\)-dimensional syndrome subspace is internally spanned by the norm-one columns it contains.
Finally, for binary Zetterberg codes, we determined the second generalized covering radius in every extension degree and proved that the third generalized covering radius is seven for the infinite subfamily with \(8\mid m\).
These results show that the arithmetic structure of norm-one configurations remains effective in the higher-order covering setting and leads to both general bounds and exact generalized covering radii.

\paragraph{Acknowledgments.} ChatGPT Pro was used for improving the presentation of the paper. This research of Shitao Li was supported by the National Natural Science Foundation of China under Grant Nos. 12526612 and 126011020. The research of Yang Li was supported by the Nanyang Technological University Research under Grant 04INS000047C230GRT01. The research of Gaojun Luo was supported by the National Natural Science Foundation of China under Grant 12401690. 
The research of Zhonghua Sun was supported by the National Natural Science Foundation of China under Grant 12571573.

\end{document}